\documentclass[aps,pra,notitlepage,twocolumn,nofootinbib,superscriptaddress,11pt,longbibliography]{revtex4-2}

\usepackage{amsfonts,amsmath,amssymb,mathtools,mathrsfs,bbm,float}
\usepackage{graphicx,epic,eepic,epsfig,latexsym,verbatim,color}
\usepackage[shortlabels]{enumitem}
 
\usepackage{tikz}
\usetikzlibrary{chains}
\usetikzlibrary{fit}
\usepackage{pgflibraryarrows}		
\usepackage{pgflibrarysnakes}		

\usepackage{epsfig}
\usetikzlibrary{shapes.symbols,patterns} 
\usepackage{pgfplots}

\usepackage[strict]{changepage}
\usepackage{hyperref}
\hypersetup{colorlinks,allcolors=blue}

\usepackage[marginal]{footmisc}
\usepackage{url}
\usepackage{theorem}
\usepackage{thmtools}

\newtheorem{definition}{Definition}
\newtheorem{proposition}{Proposition}
\newtheorem{lemma}[proposition]{Lemma}

\newtheorem{theorem}[proposition]{Theorem}


\def\squareforqed{\hbox{\rlap{$\sqcap$}$\sqcup$}}
\def\qed{\ifmmode\squareforqed\else{\unskip\nobreak\hfil
\penalty50\hskip1em\null\nobreak\hfil\squareforqed
\parfillskip=0pt\finalhyphendemerits=0\endgraf}\fi}
\def\endenv{\ifmmode\;\else{\unskip\nobreak\hfil
\penalty50\hskip1em\null\nobreak\hfil\;
\parfillskip=0pt\finalhyphendemerits=0\endgraf}\fi}
\newenvironment{proof}{\noindent \textbf{{Proof~} }}{\hfill $\blacksquare$}

\newcounter{remark}

\newcounter{example}

\mathchardef\ordinarycolon\mathcode`\:
\mathcode`\:=\string"8000
\def\vcentcolon{\mathrel{\mathop\ordinarycolon}}
\begingroup \catcode`\:=\active
  \lowercase{\endgroup
  \let :\vcentcolon
  }

\usepackage{cleveref}
\usepackage{graphicx}
\usepackage{xcolor}

\definecolor{darkblue}{RGB}{0,76,156}
\definecolor{darkkblue}{RGB}{0,0,153}
\definecolor{blue2}{RGB}{102,178,255}
\definecolor{darkred}{RGB}{195,0,0}

\RequirePackage[framemethod=default]{mdframed}
\newmdenv[skipabove=7pt,
skipbelow=7pt,
backgroundcolor=darkblue!15,
innerleftmargin=5pt,
innerrightmargin=5pt,
innertopmargin=5pt,
leftmargin=0cm,
rightmargin=0cm,
innerbottommargin=5pt,
linewidth=1pt]{tBox}

\newmdenv[skipabove=7pt,
skipbelow=7pt,
backgroundcolor=blue2!25,
innerleftmargin=5pt,
innerrightmargin=5pt,
innertopmargin=5pt,
leftmargin=0cm,
rightmargin=0cm,
innerbottommargin=5pt,
linewidth=1pt]{dBox}

\newmdenv[skipabove=7pt,
skipbelow=7pt,
backgroundcolor=darkred!15,
innerleftmargin=5pt,
innerrightmargin=5pt,
innertopmargin=5pt,
leftmargin=0cm,
rightmargin=0cm,
innerbottommargin=5pt,
linewidth=1pt]{rBox}


\newcommand{\nc}{\newcommand}
\nc{\bra}[1]{\langle#1|}
\nc{\ket}[1]{|#1\rangle}
\nc{\ketbra}[2]{|#1\rangle\!\langle#2|}
\nc{\braket}[2]{\langle#1|#2\rangle}

\nc{\proj}[1]{| #1\rangle\!\langle #1 |}
\nc{\avg}[1]{\langle#1\rangle}
\nc{\rank}{\operatorname{Rank}}
\nc{\smfrac}[2]{\mbox{$\frac{#1}{#2}$}}
\nc{\tr}{\operatorname{tr}}
\nc{\ox}{\otimes}
\nc{\dg}{\dagger}
\nc{\dn}{\downarrow}
\nc{\cA}{{\cal A}}
\nc{\cB}{{\cal B}}
\nc{\cC}{{\cal C}}
\nc{\cD}{{\cal D}}
\nc{\cE}{{\cal E}}
\nc{\cF}{{\cal F}}
\nc{\cG}{{\cal G}}
\nc{\cH}{{\cal H}}
\nc{\cI}{{\cal I}}
\nc{\cJ}{{\cal J}}
\nc{\cK}{{\cal K}}
\nc{\cL}{{\cal L}}
\nc{\cM}{{\cal M}}
\nc{\cN}{{\cal N}}
\nc{\cO}{{\cal O}}
\nc{\cP}{{\cal P}}
\nc{\cQ}{{\cal Q}}
\nc{\cR}{{\cal R}}
\nc{\cS}{{\cal S}}
\nc{\cT}{{\cal T}}
\nc{\cU}{{\cal U}}
\nc{\cV}{{\cal V}}
\nc{\cX}{{\cal X}}
\nc{\cY}{{\cal Y}}
\nc{\cZ}{{\cal Z}}
\nc{\cW}{{\cal W}}
\nc{\csupp}{{\operatorname{csupp}}}
\nc{\qsupp}{{\operatorname{qsupp}}}
\nc{\var}{{\operatorname{var}}}
\nc{\rar}{\rightarrow}
\nc{\lrar}{\longrightarrow}
\nc{\polylog}{{\operatorname{polylog}}}
\nc{\wt}{{\operatorname{wt}}}
\nc{\supp}{{\operatorname{supp}}}

\nc{\argmin}{{\operatorname{argmin}}}

\def\x{\xi}

\nc{\RR}{{{\mathbb R}}}
\nc{\CC}{{{\mathbb C}}}
\nc{\FF}{{{\mathbb F}}}
\nc{\NN}{{{\mathbb N}}}
\nc{\ZZ}{{{\mathbb Z}}}
\nc{\PP}{{{\mathbb P}}}
\nc{\QQ}{{{\mathbb Q}}}
\nc{\UU}{{{\mathbb U}}}
\nc{\EE}{{{\mathbb E}}}
\nc{\id}{{\operatorname{id}}}

\nc{\CHSH}{{\operatorname{CHSH}}}

\nc{\rU}{\mbox{U}}

\nc{\ob}[1]{#1}

\nc{\SEP}{{\text{\rm SEP}}}
\nc{\NS}{{\text{\rm NS}}}
\nc{\LOCC}{{\text{\rm LOCC}}}
\nc{\PPT}{{\text{\rm PPT}}}
\nc{\EXT}{{\text{\rm EXT}}}
\nc{\Sym}{{\operatorname{Sym}}}


\nc{\ERLO}{{E_{\text{r,LO}}}}
\nc{\ERLOCC}{{E_{\text{r,LOCC}}}}
\nc{\ERPPT}{{E_{\text{r,PPT}}}}
\nc{\ERLOCCinfty}{{E^{\infty}_{\text{r,LOCC}}}}
\nc{\Aram}{{\operatorname{\sf A}}}

\usepackage{hyperref}
\hypersetup{colorlinks=true,citecolor=blue,linkcolor=blue,filecolor=blue,urlcolor=blue,breaklinks=true}

\makeatletter
\def\grd@save@target#1{%
  \def\grd@target{#1}}
\def\grd@save@start#1{%
  \def\grd@start{#1}}
\tikzset{
  grid with coordinates/.style={
    to path={%
      \pgfextra{%
        \edef\grd@@target{(\tikztotarget)}%
        \tikz@scan@one@point\grd@save@target\grd@@target\relax
        \edef\grd@@start{(\tikztostart)}%
        \tikz@scan@one@point\grd@save@start\grd@@start\relax
        \draw[minor help lines,magenta] (\tikztostart) grid (\tikztotarget);
        \draw[major help lines] (\tikztostart) grid (\tikztotarget);
        \grd@start
        \pgfmathsetmacro{\grd@xa}{\the\pgf@x/1cm}
        \pgfmathsetmacro{\grd@ya}{\the\pgf@y/1cm}
        \grd@target
        \pgfmathsetmacro{\grd@xb}{\the\pgf@x/1cm}
        \pgfmathsetmacro{\grd@yb}{\the\pgf@y/1cm}
        \pgfmathsetmacro{\grd@xc}{\grd@xa + \pgfkeysvalueof{/tikz/grid with coordinates/major step}}
        \pgfmathsetmacro{\grd@yc}{\grd@ya + \pgfkeysvalueof{/tikz/grid with coordinates/major step}}
        \foreach \x in {\grd@xa,\grd@xc,...,\grd@xb}
        \node[anchor=north] at (\x,\grd@ya) {\pgfmathprintnumber{\x}};
        \foreach \y in {\grd@ya,\grd@yc,...,\grd@yb}
        \node[anchor=east] at (\grd@xa,\y) {\pgfmathprintnumber{\y}};
      }
    }
  },
  minor help lines/.style={
    help lines,
    step=\pgfkeysvalueof{/tikz/grid with coordinates/minor step}
  },
  major help lines/.style={
    help lines,
    line width=\pgfkeysvalueof{/tikz/grid with coordinates/major line width},
    step=\pgfkeysvalueof{/tikz/grid with coordinates/major step}
  },
  grid with coordinates/.cd,
  minor step/.initial=.2,
  major step/.initial=1,
  major line width/.initial=2pt,
}
\makeatother

\usepackage{thmtools}
\usepackage{thm-restate}
\usepackage{etoolbox}
\makeatletter
\def\problem@s{}
\newcounter{problems@cnt}

\newcommand{\allproblems}{\problem@s}
\makeatother
\usepackage{amsmath,amsfonts,amssymb,array,graphicx,mathtools,multirow,bm,times,tcolorbox,relsize,booktabs}

\usepackage[utf8]{inputenc}
\usepackage[T1]{fontenc}
\usepackage[qm]{qcircuit}
\usepackage{geometry}
\usepackage{graphicx} 
\usepackage{grffile}
\usepackage{hhline}
\usepackage{makecell}
\usepackage{float}

\definecolor{colortwo}{rgb}{0.4,0.77,0.17}
\definecolor{colorthree}{rgb}{0.01,0.51,0.93}

\DeclareMathOperator{\Tr}{Tr}

\DeclareMathOperator{\vect}{vec}
\newcommand{\rmd}[0]{{\rm d}}


\usepackage[ruled, vlined, linesnumbered]{algorithm2e}

\geometry{a4paper,left=1.6cm,right=1.6cm,top=2cm,bottom=2cm}


\allowdisplaybreaks

\begin{document}

\title{Principal Trotter Observation Error with Truncated Commutators}

\author{Langyu Li}\email{langyuli@zju.edu.cn}
\affiliation{Institute of Cyber-system and Control, College of Control Science and Engineering,\\ Zhejiang University, Hangzhou 310027, China}
\affiliation{State Key Laboratory of Industrial Control Technology, Zhejiang University, Hangzhou 310027, China}

\begin{abstract}
Hamiltonian simulation is one of the most promising applications of quantum computers, and the product formula is one of the most important methods for this purpose. Previous related work has mainly focused on the worst-case or average-case scenarios. In this work, we consider the simulation error under a fixed observable. Under a fixed observable, errors that commute with this observable become less important. To illustrate this point, we define the observation error as the expectation under the observable and provide a commutativity-based upper bound using the Baker–Campbell–Hausdorff formula. For highly commuting observables, the simulation error indicated by this upper bound can be significantly compressed. In the experiment with the Heisenberg model, the observation bound compresses the Trotter number by nearly half compared to recent commutator bounds. Additionally, we found that the evolution order significantly affects the observation error. By utilizing a simulated annealing algorithm, we designed an evolution order optimization algorithm, achieving further compression of the Trotter number. The experiment on the hydrogen molecule Hamiltonian demonstrates that optimizing the order can lead to nearly half the reduction in the Trotter number.
\end{abstract}

\date{\today}
\maketitle

\section{Introduction}

Hamiltonian simulation is one of the most promising applications in quantum computing, which was also the original intention of physicist Feynman in proposing the construction of programmable quantum computers in 1982~\cite{feynman2018simulating}. The first quantum algorithm for simulating local Hamiltonians was proposed by Lloyd in 1996~\cite{lloyd1996universal}. Since then, many quantum algorithms have been developed and widely applied in quantum chemistry~\cite{cao2019quantum, mcardle2020quantum, poulin2014trotter} and many-body physics~\cite{raeisi2012quantum,noh2016quantum}.

In Lloyd's original method and more general methods for simulating sparse Hamiltonians~\cite{aharonov2003adiabatic, berry2007efficient}, the main tool is the product formula. In fact, the study of decomposing exponential operators is also known as the Trotter-Suzuki formula~\cite{suzuki1990fractal, suzuki1991general}. In recent years, other methods such as the linear combination of unitary operators~\cite{childs2012hamiltonian}, quantum signal processing~\cite{low2017optimal}, and qubitization methods~\cite{low2019hamiltonian} have also been proposed for Hamiltonian simulation. However, the product formula method remains the most suitable for near-term quantum devices because it is easier to implement and does not require ancillary qubits, and often has lower practical empirical error~\cite{childs2018toward, childs2019faster,kim2023evidence,lv2018quantum,brown2006limitations,lanyon2011universal}. In addition to the locality and sparsity of the Hamiltonian mentioned above, many works have utilized randomness~\cite{campbell2019random, faehrmann2021randomizing, chen2021concentration}, commutativity~\cite{childs2021theory}, and knowledge of the initial state~\cite{yi2022spectral, zhao2022hamiltonian} to improve the product formula method, as well as adaptive methods based on near-term quantum devices~\cite{zhang2023low,zhao2023making,zhao2024adaptive}. These works have advanced the practical application of Hamiltonian simulation.

Past work has often focused on worst-case~\cite{childs2019faster,childs2021theory} and average-case scenarios~\cite{zhao2022hamiltonian}. The worst-case scenario often results in simulation errors that are overly conservative, requiring more resources than actually needed~\cite{childs2018toward}, making it unsuitable for deployment on near-term quantum devices. The average case also does not take into account specific scenarios for fixed observables, suggesting that there could be advancements in handling fixed observables even under average conditions. Some work has considered specific properties of observables\cite{jafferis2022traversable,heyl2019quantum,li2017measuring}, such as locality, but there is still a need for more general methods that utilize the commutativity of observables.

\begin{figure*}
    \centering
    \includegraphics[width=\textwidth]{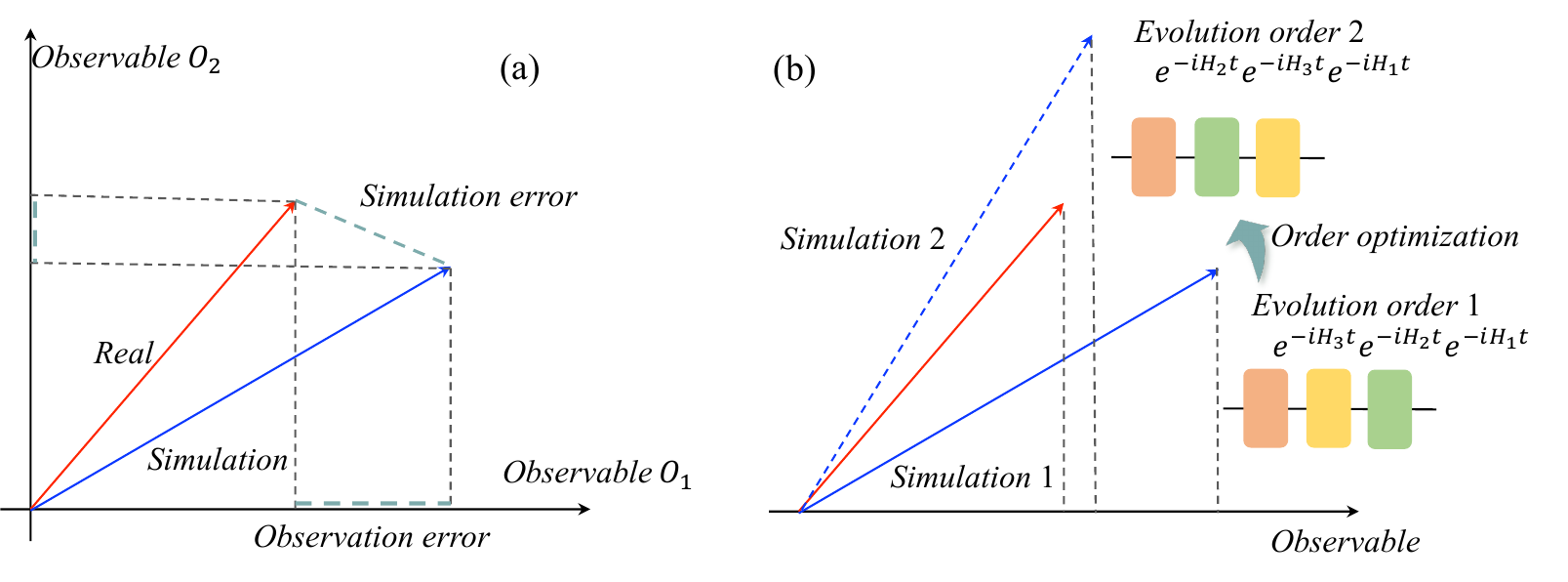}
    \caption{Illustration of simulation error with fixed observable. (a) The observation error can be considered as a projection of the simulation error under a given observable. Therefore, the commutativity between the observable and the simulation error becomes important, directly affecting the observation error. (b) Under a given observable, the evolution order of the product formula also impacts the observation error. Consequently, this work proposes using a heuristic algorithm to optimize the evolution order, thereby reducing the observation error. }\label{fig: protocol}
\end{figure*}

In this work, we focus on the situation with a fixed observable, shown in Fig.~\ref{fig: protocol}. Since a fixed observable represents a dimension in Hilbert space, errors orthogonal to this dimension will no longer matter. Thus, the simulation error under a fixed observable will be smaller. Accordingly, we define the observation error represented by the expectation under the observable. Using the Baker–Campbell–Hausdorff formula (BCH formula), we provide the form of the Lie algebra series of the observation error and derive its principal term. We present an upper bound for the principal term using trace inequalities and norms, which is affected by the commutativity between the simulated Hamiltonian difference and the observable. We approximate this upper bound and find that it relates to the evolution order. Therefore, using this upper bound as a loss function, we optimize the evolution order using a simulated annealing algorithm to further compress our upper bound.

In the demonstration experiment using the Hamiltonian of a hydrogen molecule, we find that optimizing the evolution order can compress nearly half of the Trotter number. In the experiment with the Heisenberg model, we find that under a highly commutative observable, the Trotter number can be reduced by half compared to the worst-case bound.

This work is organized as follows. Sec.~\ref{sec: preliminaries} introduces the definitions of the product formula and observation error. Sec.~\ref{sec:trotter error with fixed observable} presents the application of commutativity with respect to the observable and the leading term of the observation error, and provides an upper bound and its approximate form. Sec.~\ref{sec:application} includes experiments on order optimization and comparisons with other bounds. Finally, Sec.~\ref{sec:conclusion} summarizes this work.


\begin{table*}[htbp]
    \centering
    \resizebox{\textwidth}{!}{
    \renewcommand\arraystretch{2.5}
    \begin{tabular}{l|c|c|c}
         \hhline{====}
         Bound type&$\mathcal{D}(U,V):=$&\makecell{1-order product formula\\ $V=S_1^r$}&\makecell{2-order product formula\\ $V=S_2^r$}\\
         \hline
        
         Spectral norm \cite{lloyd1996universal} & $\Vert U - V \Vert_\infty $ &$\leq \frac{(tL\Lambda)^2}{r} e^{\frac{tL\Lambda}{r}} $& $\leq \frac{(2tL\Lambda)^3}{3r^2} e^{\frac{2tL\Lambda}{r}} $\\

         Commutator \cite{childs2021theory}&$\Vert U - V \Vert_\infty$   & $\leq \frac{t^2}{2r} \sum_{j=1}^L \Vert \sum_{k=j+1}^L [H_k, H_j] \Vert_\infty $ &\makecell{$\leq \frac{t^3}{12r^2} \sum_{j=1}^L \Vert [\sum_{k^\prime=j+1}^L  H_{k^\prime},[ \sum_{k=j+1}^L H_k, H_j]] \Vert_\infty$\\$+\frac{t^3}{24r^2} \sum_{j=1}^L \Vert [ H_{j},[  H_j, \sum_{k=j+1}^LH_k]] \Vert_\infty$} \\

         Random input \cite{zhao2022hamiltonian}& $ \mathbb{E}_{\psi\in \mathcal{E}_{1d}} \Vert U|\psi \rangle - V |\psi \rangle \Vert_{\ell_2} $ &$\leq\frac{1}{2^n} \frac{t^2}{2r}\sum_{j=1}^{L}\Vert[H_j,\sum_{k=j+1}^LH_k]\Vert_F$&\makecell{$\leq \frac{1}{2^n} \frac{t^3}{r^2}\bigg(\frac{1}{12}\sum_{j=1}^{L}\Vert[\sum_{k=j+1}^LH_{k},[\sum_{k^\prime=j+1}^LH_{k^\prime},H_j]]\Vert_F$ \\$ + \frac{1}{24}\sum_{j=1}^L\Vert[H_j,[H_j,\sum_{k=j+1}^L H_k]]\Vert_F \bigg)$} \\

         Observation (Ours) & $ \left |\Tr O\bigg(U\rho U^\dagger- V \rho V^\dagger \bigg) \right|$ &\makecell{$ \lesssim \frac{t^2}{2r^2} \sum_{l=1}^r  \Vert [ \sum_{j=1}^{L-1}\sum_{k=j+1}^L $\\$ [H_j, H_k],S^{-l} O S^l] \Vert_\infty$}&\makecell{$\lesssim \frac{t^3}{24r^3} \sum_{l=1}^r  \Vert [  \sum_{j=1}^{L-1}\sum_{k=j+1}^{L} $\\$ [H_j  +2\sum_{k^\prime = j+1 }^L H_{k^\prime } ,[H_j, H_k] ],S^{-l} O S^l] \Vert_\infty$} \\
         \hhline{====}
    \end{tabular}}
    \caption{Summary of upper bounds under different error metrics. $\Vert \cdot \Vert_\infty, \Vert \cdot \Vert_F, \Vert \cdot \Vert_{\ell_2}$ is spectral norm, Frobenius norm, $\ell_2$ norm. $\psi\in \mathcal{E}_{1d}$ denotes $|\psi\rangle$ chosen at random from 1-design ensemble, whose expectation its completely mixed state $\mathbb{E}_{\psi\in \mathcal{E}_{1d}} (|\psi \rangle \langle \psi |) = \frac{1}{2^n}I  $. $2^n$ is the dimension of Hilbert space $\mathcal{H}$ where the Hamiltonian $H$ is, and $n$ is the number of qubits. $\Lambda:= \max_j \Vert H_j \Vert_\infty$.}
    \label{tab: related works}
\end{table*}

\section{Preliminaries}\label{sec: preliminaries}

Consider a time-independent Hamiltonian $H$ with $L$ summands, $H=\sum_{j=1}^L H_j$ where $H_j$ is Hermitian. Product formula $V$ gives a simulation $V= ( \prod_{j=1}^L e^{-i\frac{t}{r}H_j})^r$ to approximate the evolution operator $U=e^{-iHt}$, where $r$ is called the Trotter number. More generally, given a Hamiltonian $H=\sum_{j=1}^{L} H_j$, evolution time $t$ and Trotter number $r$, the first-order product formula (PF1) $S_1^r(\frac{t}{r})$, the second-order product formula (PF2) $S_2^r(\frac{t}{r})$ and the $2k$-order product formula have the forms of 
\begin{align}
    S_1^r\bigg(\frac{t}{r}\bigg) =& \bigg( \prod_{j=1}^L e^{-i\frac{t}{r}H_j} \bigg)^r,\\
    S_2^r\bigg(\frac{t}{r}\bigg) =& \bigg( \prod_{j=1}^L e^{-i\frac{t}{r}H_j} \prod_{j=L}^1 e^{-i\frac{t}{r}H_j}  \bigg)^r, \\
    S_{2k}^r\bigg(\frac{t}{r}\bigg) = & \bigg( S_{2k-2}^2 (p_k \frac{t}{r})   S_{2k-2} ((1-4p_k)\frac{t}{r})S_{2k-2}^2 (p_k \frac{t}{r})  \bigg)^r,
\end{align}
where $p_k = 1/(4-4^{1/(2k-1)})$. We always use $S_1^r$, $S_r^2$, $S_{2k}^r$ in brief. When its order is not important, we always write $S^r$ without certain order. And $S^{-r}$ denotes $(S^r)^\dagger$.

Previous work has focused more on reducing the number of circuit repetitions, that is, reducing the Trotter number $r$, under certain fault tolerance $\varepsilon$. The optimal Trotter number is defined as follows
\begin{align}\label{eq: optimal trotter number}
    r^*:=\min\{r:\mathcal{D}(U,V)\leq\varepsilon, r\in\mathbf{N}^+\}.
\end{align}
$\mathcal{D}(U,V)$ is the error metric of interest, and we summarize the important past work in Tab.~\ref{tab: related works}. Based on Lloyd's original spectral norm bound \cite{lloyd1996universal}, Childs et al. \cite{childs2021theory} provided a tighter bound based on commutators, and Zhao et al. \cite{zhao2022hamiltonian} provided a bound for quantum states under the $1d$-ensemble in an average sense.

However, previous work did not emphasize the relaxation effect of observables on simulation errors. Therefore, we introduce observables, and without loss of generality, we assume $t>0$. Thus, we define the observation error and analyze the properties of the Trotter error accordingly.

\begin{definition}[Observation error]\label{def: observation error}
Given a Hamiltonian $H=\sum_{j=1}^{L} H_j$, observable $O$, initial quantum state $\rho$, evolution time $t$ and Trotter number $r$, the observation error $\mathscr{E}$ of product formula $S^r$ is defined as
\begin{align}
\mathscr{E} &=  \left |\Tr\bigg(Oe^{-itH}\rho e^{itH}\bigg)-\Tr\bigg(OS^r \rho S^{-r} \bigg) \right|.
\end{align}
\end{definition}

$\Tr (O\rho)$ is the expectation value of quantum state $\rho$ observed by $O$. Since $O$ only represents one direction in Hilbert space $\mathcal{H}$, while the spectral norm represents the worst-case scenario in all directions$\Vert U - V \Vert_\infty=\max_\rho \max_{O, \Vert O\Vert_\infty \leq 1} |\Tr(OU\rho U^\dagger - OV\rho V^\dagger)|$, the simulation error can be characterized more precisely when $O$ is fixed. With the observation error defined above, we will explore the composition of error and the effect of the commutativity among Hamiltonian, observables and evolution order.

\section{Trotter error with fixed observable}\label{sec:trotter error with fixed observable}

\subsection{Error kernel}
\begin{definition}[Error kernel]
We define the operator $E$ named error kernel satisfying
\begin{align}
    e^{-itE} =e^{-itH}S^{-r}.
\end{align}
\end{definition}

We can easily know $E$ is Hermitian since $e^{-itH}S^{-r}$ is unitary. Specifically, $e^{-itH}S^{-r}$ has the spectral decomposition $e^{-itH}S^{-r} = \sum_j \lambda_j^\prime |j\rangle \langle j|$. Then $E=\sum_j \lambda_j |j\rangle \langle j| $ we just need to let $e^{-it\lambda_j }= \lambda_j^\prime$. Furthermore, we can give a unique $E$ with $-\frac{\pi}{t} \leq \lambda_j \leq \frac{\pi}{t}$, which implies $\Vert E \Vert_\infty \leq \frac{\pi}{t}$. Next, we can evaluate the norm of $E$.
\begin{proposition}
If there exists a bound on $\Vert e^{-itH} - S^r \Vert_\infty \leq \mathcal{B} \leq 2$, the spectral norm of error kernel $E$ is bounded by
\begin{align}
    \Vert E \Vert_\infty \leq \frac{1}{t} \arccos{(1-\frac{\mathcal{B}^2}{2})}.
\end{align}
\end{proposition}
$\mathcal{B}$ can be the spectral norm bound or commutator shown in Tab.~\ref{tab: related works}, which is common on that $\mathcal{B}$ is a small factor has the scaling of $\mathcal{O}(\frac{t^2}{r})$ or smaller. We know that $\lim_{\mathcal{B} \rightarrow 0} \arccos{(1-\frac{\mathcal{B}^2}{2})} = \mathcal{B}$, which implies 
\begin{align}
    \Vert E \Vert_\infty \sim \mathcal{O} (\frac{\mathcal{B}}{t}).
\end{align}

After evaluating the value of $E$, we can give a commutator bound of observation error based on it.  
\begin{theorem}\label{thm:simulation error bound from kernel}
Observation error $\mathscr{E}$ defined in Def.~\ref{def: observation error} has the upper bound as
\begin{align}
    \mathscr{E} \leq &t \Vert [E, O] \Vert_\infty ,\label{eq:simulation error from kernel}
\end{align}
where $E$ is error kernel.
\end{theorem}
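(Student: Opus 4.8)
The plan is to use the error kernel to absorb the entire discrepancy between the exact and Trotterized evolutions into a single exponential, thereby converting the difference of two Heisenberg-picture expectation values into one commutator expression. First I would invoke the defining relation $e^{-itE}=e^{-itH}S^{-r}$ to write $U:=e^{-itH}=e^{-itE}S^{r}=e^{-itE}V$ with $V:=S^{r}$. Substituting this into $U\rho U^{\dagger}$ and setting $\sigma:=V\rho V^{\dagger}$, which is again a density operator because $V$ is unitary, the exact expectation becomes $\Tr(O\,e^{-itE}\sigma\,e^{itE})$ while the Trotterized expectation is simply $\Tr(O\sigma)$. Applying cyclicity of the trace to the first term collapses the observation error to
\begin{align}
\mathscr{E}=\left|\Tr\Big((e^{itE}Oe^{-itE}-O)\,\sigma\Big)\right|.
\end{align}

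Next I would treat the conjugated observable $e^{itE}Oe^{-itE}-O$ via the fundamental theorem of calculus. Writing $O(s):=e^{isE}Oe^{-isE}$ one has $\frac{d}{ds}O(s)=i[E,O(s)]$, and since $E$ commutes with $e^{isE}$ the conjugation can be pulled outside, giving $[E,O(s)]=e^{isE}[E,O]e^{-isE}$. Integrating from $0$ to $t$ then yields
\begin{align}
e^{itE}Oe^{-itE}-O=i\int_0^t e^{isE}[E,O]e^{-isE}\,ds,
\end{align}
so that $\mathscr{E}\le\int_0^t\big|\Tr\!\left(e^{isE}[E,O]e^{-isE}\,\sigma\right)\big|\,ds$.

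Finally I would bound each integrand by the matrix Hölder inequality $|\Tr(A\sigma)|\le\Vert A\Vert_\infty\Vert\sigma\Vert_1$, using that $\Vert\sigma\Vert_1=1$ for a density operator and that the spectral norm is invariant under unitary conjugation, whence $\Vert e^{isE}[E,O]e^{-isE}\Vert_\infty=\Vert[E,O]\Vert_\infty$ independently of $s$. The integrand is thus bounded by the constant $\Vert[E,O]\Vert_\infty$, and integrating over $[0,t]$ produces the claimed bound $\mathscr{E}\le t\Vert[E,O]\Vert_\infty$.

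The calculation is essentially routine once the error kernel is used to package the whole difference between $U$ and $V$ into $e^{-itE}$; the only step requiring care is the passage from the operator difference to the commutator. The clean factor of $t$ comes precisely from the integral representation, so I would state the Heisenberg derivative and the commutation of $E$ with $e^{isE}$ explicitly rather than resorting to a cruder estimate such as $\Vert e^{itE}Oe^{-itE}-O\Vert_\infty\le 2\Vert O\Vert_\infty$, which would discard exactly the commutativity structure that the theorem is meant to expose.
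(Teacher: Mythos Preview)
Your proof is correct and follows essentially the same route as the paper: both use the error kernel to collapse the difference into a single unitary conjugation by $e^{-itE}$, then apply the integral representation $e^{itE}Oe^{-itE}-O=i\int_0^t e^{isE}[E,O]e^{-isE}\,ds$ together with the Hölder inequality $|\Tr(A\sigma)|\le\Vert A\Vert_\infty\Vert\sigma\Vert_1$ and unitary invariance of the spectral norm. The only cosmetic difference is ordering: the paper first passes to $\Vert e^{itH}Oe^{-itH}-S^{-r}OS^{r}\Vert_\infty=\Vert[e^{-itE},O]\Vert_\infty$ and then applies the integral identity at the level of operator norms, whereas you keep the state $\sigma=S^{r}\rho S^{-r}$ explicit and apply Hölder last.
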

The bound above can be normalized to $t \Vert [E, O] \Vert_\infty =t\Vert E \Vert_\infty \Vert O \Vert_\infty   \Vert [\frac{E}{\Vert E \Vert_\infty}, \frac{O}{\Vert O \Vert_\infty}] \Vert_\infty $. With the normalized observable $O, \Vert O \Vert_\infty=1$, the bound $t \Vert [E, O] \Vert_\infty \sim \mathcal{O} (\alpha \mathcal{B})$, where $\alpha:= \Vert [\frac{E}{\Vert E \Vert_\infty}, O] \Vert_\infty$ is a normalized parameter indicating the commutativity related $O$. When $\alpha < 1$, this bound will contribute to a smaller Trotter number $r$.

Similarly, we can give an upper bound in the sense of average.
\begin{proposition}
Given a Hamiltonian $H=\sum_{j=1}^{L} H_j$, observable $O$, evolution time $t$ and Trotter number $r$, the observation error with random input $\mathscr{E}_{1d}$ in $n$-qubit 1-design ensemble $\rho \in \mathcal{E}_{1d}$ bounded by
\begin{align}
    \mathscr{E}_{1d} = &\mathbb{E}_{\rho \in \mathcal{E}_{1d}} \left |\Tr\bigg(Oe^{-itH}\rho e^{itH}\bigg)-\Tr\bigg(OS^r \rho S^{-r} \bigg) \right|\\
    \leq & \frac{t}{2^n } \Vert [E, O]  \Vert_1 ,
\end{align}
where $ \Vert \cdot \Vert_1$ is trace norm.
\end{proposition}
There holds $\frac{1}{2^n } \Vert [E, O] \Vert_1 \leq \Vert [E, O] \Vert_\infty$, which implies $\mathscr{E}_{1d}$ is also bounded by $\Vert [E, O] \Vert_\infty$. The equality will get when all the singular values of $ [E, O] $ are equal to $ \Vert [E, O] \Vert_\infty$, which is very rare. So the random input bound is always lower, this aligns with the notion that the average case is more optimistic than the worst case.

Therefore, the commutation relation between $E$ and $ O$ becomes important, whether in the worst-case scenario or the average case. If $E$ and $O$ commute highly, i.e., $ \alpha \ll 1 $, the observation error will become very small, and the required Trotter number $r $ for a given $ \varepsilon$ will also decrease. This is beneficial for the practical application of Hamiltonian simulation on near-term quantum devices.

\subsection{Principal error}

Although $E$ is important, it is difficult to compute. We use an equivalent Hamiltonian to analyze the components of the observation error. Considering the Hamiltonian $H=\sum_{j=1}^{L} H_j$, evolution time $t$ and Trotter number $r$, we can calculate the equivalent Hamiltonian $\widetilde{H}$ of product formula $S^r$
\begin{align}
    e^{-it\widetilde{H}} := S^r.
\end{align}
\begin{proposition}
Specifically, for the first-order product formula $S_1^r$, its equivalent Hamiltonian is 
\begin{align}
    \widetilde{H} =& H +\frac{it}{r}\bigg( \frac{1}{2} \sum_{j=1}^{L-1}\sum_{k=j+1}^L [H_j, H_k] \bigg) + \mathcal{O}\bigg( \frac{t^2}{r^2} \bigg).
\end{align}
For the second-order product formula $S_2^r$, its equivalent Hamiltonian is 
\begin{align}
  \widetilde{H} =& H +\bigg(\frac{it}{r} \bigg)^2 \bigg( -\frac{1}{24} \cdot \\
  &  \sum_{j=1}^{L-1}\sum_{k=j+1}^{L}  [H_j  +2\sum_{k^\prime = j+1 }^L H_{k^\prime } ,[H_j, H_k] ] \bigg) + \mathcal{O}\bigg( \frac{t^3}{r^3} \bigg).
\end{align}
\end{proposition}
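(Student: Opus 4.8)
The plan is to reduce the whole computation to a single Trotter step and then apply the Baker--Campbell--Hausdorff (BCH) formula, matching powers of $\tau := t/r$. Since $S^r = (\text{step})^r$ is a product of $r$ identical factors and $e^{-it\widetilde H} = S^r$, it suffices to write one step $\prod_{j} e^{-i\tau H_j}$ (respectively the symmetric product for $S_2$) as a single exponential $e^{-i\tau G}$; then $(e^{-i\tau G})^r = e^{-ir\tau G}=e^{-it G}$ forces $\widetilde H = G$, so the $r$ repetitions only rescale the time argument back up to $t$ and contribute nothing new. I would therefore expand the single step's logarithm $\Omega := \log(\text{step})$ as a Lie series in the $X_j := -i\tau H_j$ and solve $\Omega = -i\tau\widetilde H$ order by order.

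For the first-order formula I would use the multivariate BCH expansion
\[ \Omega = \sum_{j} X_j + \tfrac12\sum_{j<k}[X_j,X_k] + (\text{nested commutators of degree}\ge 3). \]
The degree-one term reproduces $\sum_j H_j = H$; substituting $X_j=-i\tau H_j$ into the degree-two term and dividing by $-i\tau$ yields the stated $\frac{it}{r}\cdot\tfrac12\sum_{j<k}[H_j,H_k]$ correction, where the sign and the $j<k$ ordering are fixed by the order in which the exponential factors appear. Every omitted term is a commutator of degree $\ge 3$ in the $X_j$, hence $O(\tau^3)$ in $\Omega$ and $O(\tau^2)=O(t^2/r^2)$ in $\widetilde H$, which is precisely the claimed remainder.

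For the second-order formula I would first exploit time-reversal symmetry: the symmetric step satisfies $\text{step}(\tau)\,\text{step}(-\tau)=\mathbb 1$, so its effective generator is an even function of $\tau$. This removes the $O(\tau)$ term for free and shows the leading correction is $O(\tau^2)$, matching the $(it/r)^2$ prefactor. Concretely, writing the step as $S_1^{\rightarrow}(\tau)\,S_1^{\leftarrow}(\tau)$ with $S_1^{\leftarrow}(\tau)=S_1^{\rightarrow}(-\tau)^{-1}$, the degree-two commutators from the forward and backward halves cancel, while the degree-three terms survive. Feeding in the degree-three part of the single forward product together with the cross commutator between the two halves, and then reorganizing the resulting multi-index sums, should collapse everything to the inner commutator $[H_j+2\sum_{k'>j}H_{k'},[H_j,H_k]]$ with coefficient $-\tfrac1{24}$.

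I expect the main obstacle to be exactly this last reorganization: the degree-three BCH terms produce many nested commutators $[H_a,[H_b,H_c]]$ summed over several overlapping index ranges, and showing that they collapse to the compact closed form with the exact constants (the $-1/24$ and the factor $2$ on the inner sum) requires careful combinatorial bookkeeping and repeated use of the Jacobi identity. Once the coefficient is identified, the remainder estimate $O(t^3/r^3)$ follows routinely from the same degree-counting argument used in the first-order case.
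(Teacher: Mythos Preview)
Your approach is correct and, for the first-order formula, essentially identical to the paper's: both reduce to a single step and read off the BCH expansion in $\tau=t/r$. For the second-order formula, however, the paper takes a shorter route than the one you outline. Rather than splitting the symmetric step into forward and backward halves and tracking the cross-commutators between them, the paper first writes down the BCH expansion of a general first-order product through the cubic term,
\[
\widetilde H = H + \frac{it}{r}\Big(\tfrac12\!\sum_{j<k}[H_j,H_k]\Big) + \Big(\tfrac{it}{r}\Big)^{\!2}\Big(\tfrac{1}{12}\!\sum_{j<k}\big[H-2H_k-4\!\sum_{k'>k}H_{k'},[H_j,H_k]\big]\Big)+O\!\big(\tfrac{t^3}{r^3}\big),
\]
and then simply regards $S_2$ as a first-order product with the $2L$-term sequence $\tfrac12(H_1,\dots,H_L,H_L,\dots,H_1)$. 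Substituting that sequence makes the quadratic term vanish by cancellation of mirror pairs and collapses the cubic term directly to the stated $-\tfrac{1}{24}$ expression, bypassing the ``main obstacle'' you anticipate. Your time-reversal argument is more conceptual---it explains \emph{why} the odd-order terms disappear---but the paper's substitution trick avoids the separate bookkeeping of forward, backward, and cross contributions and gets the coefficient with less work.
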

We call $H^\prime = \widetilde{H} - H$ Hamiltonian difference. $H^\prime$ is Hermitian and $H^\prime\sim \mathcal{O}( \frac{t}{r} )$ for the first-order product formula and $H^\prime\sim \mathcal{O}( \frac{t^2}{r^2} )$ for the second-order product formula. Equivalent Hamiltonian shows a commutator form and it is related to the evolution order. 

With the equivalent Hamiltonian, $E$ can be expressed by $-itE=\log (e^{-itH} e^{-it\widetilde{H}} )$, but this calculating $E$ through this method results in a Lie algebra series. This series converges but not asymptotically, making it difficult to obtain the main error through simple truncation. Therefore, we need to find another approach.

\begin{theorem}[Principal observation error]\label{thm:principal observation error}
Observation error $\mathscr{E}$ defined in Def.~\ref{def: observation error} has the form of 
\begin{align}\label{eq: principal observation error}
    \mathscr{E} = \bigg| \Tr ( [e^{-iHt} M e^{iHt}-M,O] e^{-iHt} \rho e^{iHt} ) \bigg| + \mathcal{O}\big( \frac{t}{r}H^\prime  \big),
\end{align}
where $M$ is the same size Hermitian operator as $H$ satisfying $[iM,H]=H^\prime$, and $ \mathcal{O}( \frac{t}{r}H^\prime  )$ denotes the higher order error of $\frac{t}{r}$, because different order product formula has different orders of Hamiltonian difference, i.e. the first-order is $\mathcal{O}(H^\prime)\sim \mathcal{O}(\frac{t}{r})$ and $ \mathcal{O}( \frac{t}{r}H^\prime  )\sim \mathcal{O}(\frac{t^2}{r^2})$. We name $\widetilde{\mathscr{E}}$ the principal observation error as $\widetilde{\mathscr{E}} = | \Tr ( [e^{-iHt} M e^{iHt}-M,O] e^{-iHt} \rho e^{iHt} ) |$. Eq.\eqref{eq: principal observation error} also holds by substituting $H$ with $\widetilde{H}$, and $M$ should satisfy $[iM, \widetilde{H}] = H^\prime$.
\end{theorem}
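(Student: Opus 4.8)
The plan is to linearize the product formula in the Hamiltonian difference $H^\prime$ and to show that, to leading order, replacing $U=e^{-iHt}$ by $V=e^{-it\widetilde H}$ amounts to conjugating $U$ by the unitary generated by $M$. The starting observation is that the defining relation $[iM,H]=H^\prime$ is exactly the first-order term of the adjoint expansion $e^{iM}He^{-iM}=H+i[M,H]+\mathcal{O}(M^2)=\widetilde H+\mathcal{O}(M^2)$. Since conjugation commutes with the functional calculus, $e^{-it\,e^{iM}He^{-iM}}=e^{iM}e^{-itH}e^{-iM}$ exactly, so I would first establish
\begin{align}
V = e^{iM}\,U\,e^{-iM}+\mathcal{O}\big(\tfrac{t}{r}H^\prime\big),
\end{align}
where the remainder comes from the $\mathcal{O}(M^2)$ discrepancy between $\widetilde H$ and $e^{iM}He^{-iM}$ propagated through the exponential by Duhamel's formula.

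Second, I would expand $e^{\pm iM}=\id\pm iM+\mathcal{O}(M^2)$ and compute $V\rho V^\dagger-U\rho U^\dagger$ to first order in $M$. Collecting the two linear contributions gives $i[M,U\rho U^\dagger]-iU[M,\rho]U^\dagger$; rewriting the second using unitarity of $U$, namely $U[M,\rho]U^\dagger=[UMU^\dagger,U\rho U^\dagger]$, merges them into the single commutator $-i\,[\,e^{-iHt}Me^{iHt}-M,\;e^{-iHt}\rho e^{iHt}\,]$, where $UMU^\dagger=e^{-iHt}Me^{iHt}$. Taking the trace against $O$ and applying the cyclic identity $\Tr(O[A,B])=\Tr([O,A]B)$ to move the commutator onto $O$ yields exactly $i\,\Tr\!\big([e^{-iHt}Me^{iHt}-M,O]\,e^{-iHt}\rho e^{iHt}\big)$, whose modulus is the principal term $\widetilde{\mathscr{E}}$. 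The variant with $\widetilde H$ in place of $H$ follows identically once $M$ is taken to solve $[iM,\widetilde H]=H^\prime$, since to leading order $H$ and $\widetilde H$ are interchangeable inside the conjugation.

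The main obstacle is the bookkeeping of the remainders. Two sources must be shown to be of the stated order $\mathcal{O}(\frac{t}{r}H^\prime)$: the $\mathcal{O}(M^2)$ error in passing from $\widetilde H$ to the conjugated generator, and the quadratic term dropped in the $e^{\pm iM}$ expansion. Because the principal term is linear in $H^\prime$ while both remainders are quadratic, using $H^\prime\sim\mathcal{O}(t/r)$ (resp.\ $\mathcal{O}(t^2/r^2)$ for PF2) they collapse to $\mathcal{O}(H^{\prime 2})=\mathcal{O}(\frac{t}{r}H^\prime)$; this is the delicate step, since $t$ itself need not be small and the Duhamel remainder carries its own factor of $t$. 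A secondary subtlety is the existence of $M$: the equation $[iM,H]=H^\prime$ is of Sylvester type and is solvable only when $H^\prime$ has no component commuting with $H$ (no diagonal block in the eigenbasis of $H$). This can be sidestepped by working throughout with the manifestly well-defined object $e^{-iHt}Me^{iHt}-M=\int_0^t e^{-iHu}H^\prime e^{iHu}\,du$, which reproduces the same principal term directly from first-order perturbation theory and requires no solvability hypothesis.
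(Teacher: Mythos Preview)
Your approach is correct and reaches the same principal term, but it is genuinely different from the paper's argument. The paper proceeds by expanding both Heisenberg evolutions $e^{itH}Oe^{-itH}$ and $e^{it\widetilde H}Oe^{-it\widetilde H}$ as full BCH (adjoint) series, writes the difference as a sum of nested commutators in which exactly one slot is occupied by $H'$, and then uses a combinatorial lemma (repeated Jacobi identities) to rearrange each term $H'_n$ into a binomial form that can be resummed into $e^{-iHt}Me^{iHt}-M$. Your route instead exploits the single geometric observation that $[iM,H]=H'$ makes $e^{iM}$ an approximate intertwiner between $e^{-itH}$ and $e^{-it\widetilde H}$, after which a first-order expansion in $M$ of $V\rho V^\dagger$ lands directly on the commutator with $e^{-iHt}Me^{iHt}-M$. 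This bypasses the combinatorial resummation entirely and makes the role of $M$ transparent; it also dovetails naturally with your integral rewriting $\int_0^t e^{-iHu}H'e^{iHu}\,du$, which is exactly how the paper eliminates $M$ immediately after the theorem. The price you pay is that your remainder has to be pushed through the exponential via Duhamel, so the hidden constants pick up an $e^{\mathcal{O}(t\Vert H\Vert_\infty)}$ factor; the paper's series-by-series bookkeeping incurs the same growth (cf.\ their Proposition~7 bound on $\mathscr{E}'$), so neither approach is stronger in the end, but yours is shorter and more structural.
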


$[iM,H]=H^\prime$ does not always have a solution. We can represent it in a vectorization form of
\begin{align}
    (I \otimes H - H^\top \otimes I) \vect(iM) = \vect (H^\prime).
\end{align}
If it satisfies $\rank(I \otimes H - H^\top \otimes I) =\rank(I \otimes H - H^\top \otimes I |  \vect (H^\prime)) $, then $M$ exists, and we can provide a general form $M=M^* + \widetilde{M}$, where $\widetilde{M} \in \widetilde{\mathcal{M}}:= \{ [iM,H] = 0 \}$. $M^*$ can be $M^* = \arg \min_{M \in \mathcal{M}} \Vert \vect{ (M)} \Vert_{\ell_2} = \arg \min_{M \in \mathcal{M}} \Vert M \Vert_F$.

The existence of $M$ only affects its explicit form. Even if $M$ does not exist, we can eliminate it through its definition. Using $e^{i Bt} iA e^{-i Bt} - iA = \int_0^t \rmd \tau e^{i\tau B }[iB,iA] e^{-i\tau B }  $~\cite{childs2021theory}, the principal observation error can be rewritten as 
\begin{align}
    \widetilde{\mathscr{E}} =&  \bigg| \Tr ( [\int_0^t \rmd \tau e^{-i\tau \widetilde{H} }[\widetilde{H},M] e^{i\tau \widetilde{H} },O] e^{-i\widetilde{H}t} \rho e^{i\widetilde{H}t} ) \bigg|\\
     =& \bigg|\int_0^t \text{d}\tau  \Tr( [ e^{-i\tau \widetilde{H} }H^\prime e^{i\tau \widetilde{H} },O] e^{-i\widetilde{H}t} \rho e^{i\widetilde{H}t} ) \bigg|.
\end{align}

\begin{theorem}[Bound of principal observation error]
 Principal observation error defined in Thm.~\ref{thm:principal observation error} has an upper bound as
\begin{align}
    \widetilde{\mathscr{E}} \leq &  \int_0^t \rmd \tau \Vert [  H^\prime ,e^{i\tau \widetilde{H} } O e^{-i\tau \widetilde{H}}] \Vert_\infty.
\end{align}
\end{theorem}
Therefore, we can describe the commutation relation between $E$ and $O $ using this upper bound of the principal error. $e^{i\tau \widetilde{H} } O e^{-i\tau \widetilde{H}}$ is the observable evolving with time in Heisenberg picture. This bound connects the Hamiltonian difference with observables in commutator. It shows that the commutativity is related with each observable during the evolution time and accumulates with evolution. Approximating and optimizing this bound can guide us in finding a smaller Trotter number.

\begin{proposition}
$ \mathscr{E}^\prime$ is defined as the absolute value of the difference of simulation error and principal error. Specifically, it is bounded by
\begin{align}
    \mathscr{E}^\prime :=& \bigg| \mathscr{E} - \widetilde{\mathscr{E} } \bigg| \\
    \leq &  \Vert O \Vert_\infty  e^{2t \Vert H\Vert_\infty} (e^{2t\Vert H^\prime \Vert_\infty } -1 -2t \Vert H^\prime \Vert_\infty  ).
\end{align}
\end{proposition}
$ \mathscr{E}^\prime$ is used to measure the principal term and the actual observation error. If the order of $ \mathscr{E}^\prime$ is sufficiently small, it indicates that the main error accounts for a large proportion of the actual error. In this bound, the factor $e^{2t\Vert H^\prime \Vert_\infty } -1 -2t \Vert H^\prime \Vert_\infty \sim \mathcal{O}(2t^2 \Vert H^\prime \Vert_\infty^2) $ indicates the order of this bound. The sum of the principal bound and this bounds can also serve as a strict bound for the observation error.

\subsection{Bound approximation}

Although we have already provided an upper bound in the previous text, this upper bound is related to $H^\prime$ and the time evolution of $O$, making an exact solution difficult. Therefore, we need to estimate this upper bound. Since $H^\prime$ is a series in $\frac{t}{r}$, and $\frac{t}{r}$ usually approaches 0, we truncate this commutator series directly and replace $H^\prime$ with its first term $\bar{H}$. As for the time evolution of $O$, there holds
\begin{align}
    &\int_0^t \rmd \tau \Vert [  H^\prime ,e^{i\tau \widetilde{H} } O e^{-i\tau \widetilde{H}}] \Vert_\infty\\
    = & \lim_{r^\prime \rightarrow +\infty} \sum_{k=1}^{r^\prime}  \frac{t}{r^\prime} \Vert [  H^\prime ,S^{-k} O S^k] \Vert_\infty,
\end{align}
so we can design a cost function $\mathcal{L}$ to estimate this bound as follows
\begin{align}
    \mathcal{L} = & \frac{t}{r} \sum_{k=1}^r  \Vert [ \bar{H} ,S^{-k} O S^k] \Vert_\infty. \label{eq: segment error estimation}
\end{align}

Meanwhile, we find Eq.~\eqref{eq: segment error estimation} has a strong connection with the evolution order. We define $\Pi$ as the evolution order. $\Pi = \{\Pi_1, \cdots ,\Pi_L \}$ is a sequence, making the evolution Hamiltonian become
\begin{align}
    H=\sum_{j=1}^L H_{\Pi_j}.
\end{align}
For example, given a Hamiltonian with $L=3$ and $\Pi=\{2,1,3\}$, its Trotter block of PF1 will be $e^{-iH_3 t} e^{-iH_1 t} e^{-iH_2 t}$. 

However, this can be seen as a combanitorial optimization problem with discrete state space, so we use heuristic algorithms to optimize it. Concretely, by swaping two terms we can obtain a new order, so we use simulated annealing algorithm to optimize the evolution order by swaping summands randomly. Detail is shown in Alg.~\ref{alg:simulated annealing} 

\section{Numeric results and applications}\label{sec:application}
\subsection{Two terms hamiltonian}

If a Hamiltonian $H=\sum_{j=1}^{L} H_j$ with $L$ terms can be divided into two parts, with each part containing terms that mutually commute, then we can approximate $M$. For example, the one-dimensional Heisenberg model 
\begin{align}
    H_{XYZ} =& \sum_j X_{j} X_{j+1} +Y_{j} Y_{j+1} +Z_{j} Z_{j+1},
\end{align}
which can be divided into $H^{(1)} = \sum_{j \in {\rm odd}}  X_{j} X_{j+1} +Y_{j} Y_{j+1} +Z_{j} Z_{j+1}, H^{(2)} =\sum_{j \in {\rm even}}X_{j} X_{j+1} +Y_{j} Y_{j+1} +Z_{j} Z_{j+1}$, or the transverse field Ising model
\begin{align}
    H_{IS} = \underbrace{{\sum_{i<j} J_{ij} Z_i Z_{j}}  }_{H^{(1)}}+ \underbrace{\sum_j h_j X_j}_{H^{(2)}}.
\end{align}
With the approximate equation $[iM,H]=\bar{H}$, we can obtain $M = \frac{t}{r}H^{(1)}$ or $M = \frac{t}{r}H^{(2)}$ in the first order product formula. If $M$ is known, then we can simplify $\widetilde{\mathscr{E}}$, 
\begin{align}
   \widetilde{\mathscr{E}} =&  \bigg| \Tr ( e^{iHt} O e^{-iHt} [\rho ,M]+[O,M]  e^{-iHt} \rho e^{iHt} ) \bigg| \\
   \leq &  \Vert O \Vert_\infty \Vert [\rho ,M] \Vert_1 + \Vert [O,M] \Vert_\infty \sim \mathcal{O}(\frac{t}{r}),
\end{align}
where the main term is no longer $\mathcal{O}(\frac{t^2}{r})$ but  $\mathcal{O}(\frac{t}{r})$, which is closer to the actual observation error when $t$ is large.

\subsection{Evolution order optimization}
\begin{figure}
    \centering
    \includegraphics[width=\linewidth]{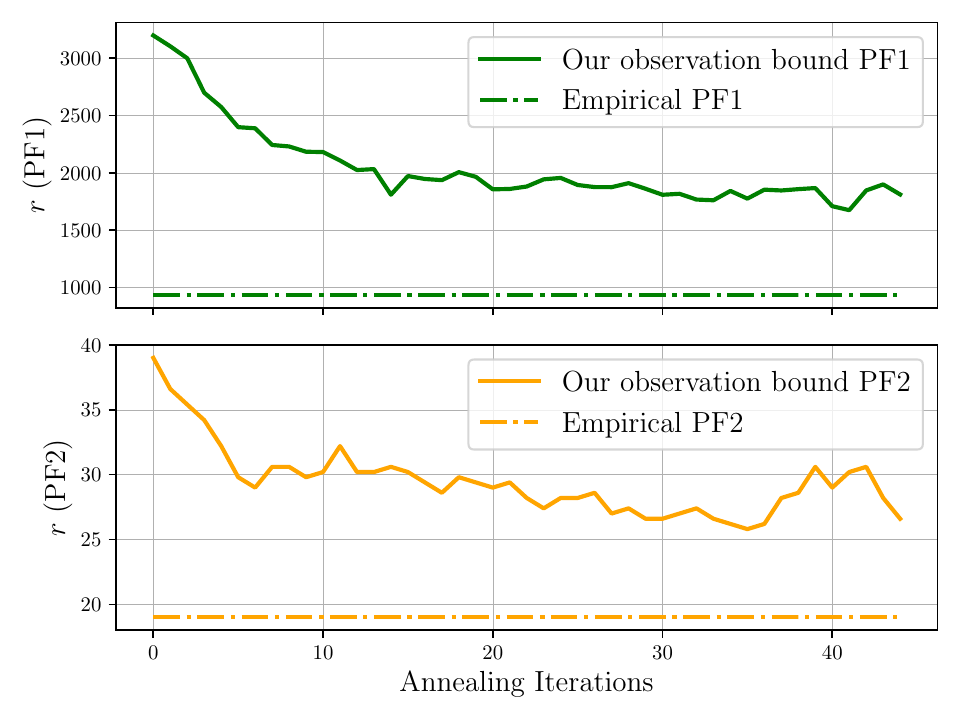}
    \caption{Observation bound optimization by simulated annealing on evolution order. We use the hydrogen molecule Hamiltonian in STO-3G basis with 4 qubits and 15 summands. In this experiment, $\varepsilon=10^{-3},t=4, \theta_0 = 10, \theta_\infty = 1, \alpha=0.95$, and this is the average result of $50$ trials.}\label{fig: sequence annealing}
\end{figure}

To investigate the effects of the evolution order, we selected a Hamiltonian of appropriate length, specifically the Pauli decomposition of the Hamiltonian for a hydrogen molecule using the STO-3G basis set in Eq.~\eqref{ap: eq: hydrogen}. For our observable, we chose $O=Z_0$. Although such an experiment might lack practical physical significance, it is sufficiently representative for discussing the optimization of the evolution order. The result is shown in Fig.~\ref{fig: sequence annealing}. In the experiment, the order has a significant impact on the Hamiltonian difference and its commutativity with the observables. It can be seen that optimizing the evolution order can reduce our bound by approximately half, which is of great help in reducing the Trotter number.

\subsection{Numerical scaling}
Here, we used the same Heisenberg Hamiltonian with a random magnetic field in~\cite{zhao2022hamiltonian}, which is 
\begin{align}
    H  =& \sum_j X_{j} X_{j+1} +Y_{j} Y_{j+1} +Z_{j} Z_{j+1} + \sum_j h_j Z_j,
\end{align}
where $h_j \in (-1,1)$. However, we employed X-Y-Z as the evolution scheme for the terms, $H^{(1)}=\sum_j X_{j} X_{j+1}, H^{(2)}=\sum_j Y_{j} Y_{j+1}, H^{(3)}=\sum_j Z_{j} Z_{j+1} +\sum_j h_j Z_j , $. To highlight the relationship with the commutativity of observables, we used the following observable
\begin{align}
    O = \frac{1}{c} (I + 0.1\sum_j Z_j),
\end{align}
where $c = \Vert I + 0.1\sum_j Z_j \Vert_\infty  $ is normalizing factor. In cases where the observable commutes highly with the Hamiltonian, we observed a noticeable reduction in the Trotter number, as shown in Fig.~\ref{fig: bound comparison}. From the figure, it can be seen that our bound compresses nearly half compared to the commutator bound. Additionally, our scaling factor is $\mathcal{O}(n^{2.60})$, which is lower than the commutator bound. This implies that, with a high number of quantum bits, our bound is more effective in reducing the Trotter number. The random input bound is quite close to the empirical number, as the random input bound represents an average-case bound and does not guarantee the worst-case scenario. Despite the various approximations used in calculating our observable bound, a substantial scaling can accommodate the errors introduced by these approximations. Therefore, it can be said that the bound on the main term of the observable error is worth applying.

\begin{figure*}
    \centering
    \includegraphics[width=0.5\textwidth]{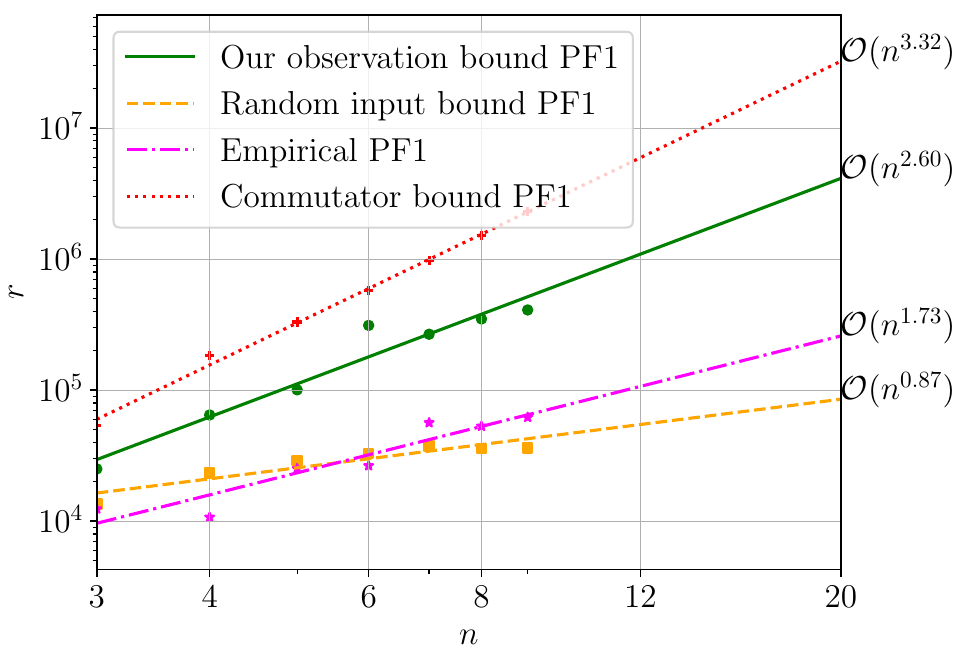}\includegraphics[width=0.5\textwidth]{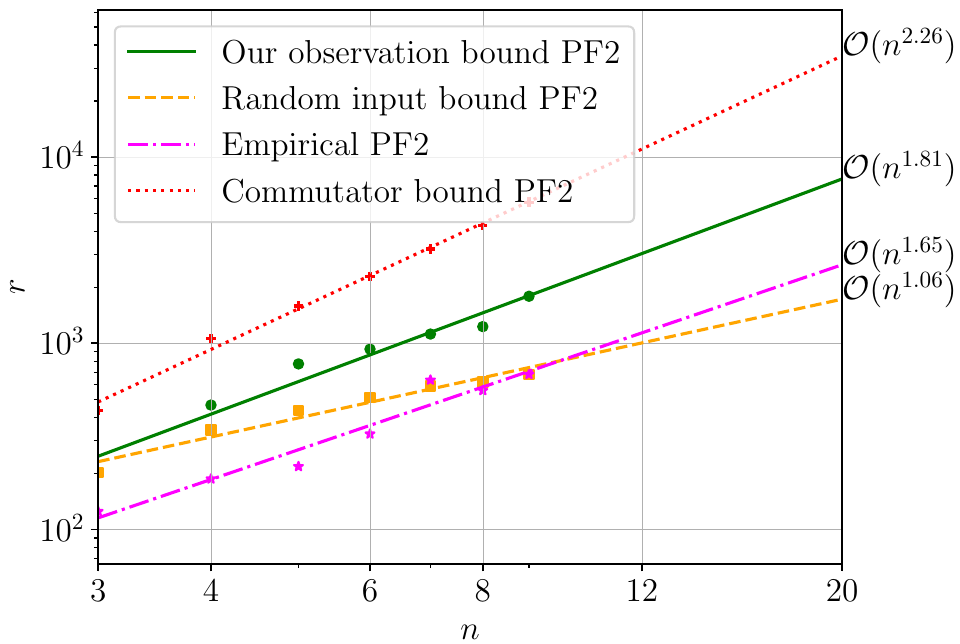}
    \caption{Comparisons of our observation bound and others in Heisenberg model. Here we first optimize the evolution order for ours. In the experiment, $\varepsilon=10^{-3}$, $t=n$ and $n$ is the number of qubits. }\label{fig: bound comparison}
\end{figure*}


\section{Conclusion and Discussion}\label{sec:conclusion}
In this work, we analyzed the simulation error under a fixed observable. Using the BCH formula, we provided an analytical form and upper bound for the leading term of the observation error. We approximated this upper bound and designed an optimization algorithm for the Hamiltonian evolution order under a fixed observable, leveraging the commutativity of the observable and simulated annealing. Through experiments on the hydrogen molecule and the Heisenberg model, we demonstrated the compression of the general upper bound achieved by optimizing the evolution order and using a fixed observable. The theory presented in this work can lead to a smaller Trotter number in cases where the observable commutes highly.

There are also many interesting problems for further research. If the observable does not commute highly with the Hamiltonian difference, the observation bound might actually increase. Furthermore, this work does not quantify commutativity; future research could focus on measuring commutativity and its corresponding relationship to the reduction in the Trotter number.

Additionally, during the preparation of this work, we noticed the work by Yu et al.~\cite{yu2024observable} on the simulation errors of the Hamiltonian given certain observables. They systematically analyzed the acceleration effect of observables on Hamiltonian simulation based on the properties of the observables, which aligns with the views presented in this work.

\section*{Acknowledgment}
This work was partially supported by Baidu Research. L. Li thanks his colleagues in the Institute of Quantum Computing, Baidu Research, for constructive discussions and also thanks Prof. Qi Zhao for helpful suggestions.
\bibliography{ref}

\clearpage

\appendix
\setcounter{subsection}{0}
\setcounter{table}{0}
\setcounter{figure}{0}

\vspace{2cm}
\onecolumngrid
\vspace{2cm}

\section{Detailed Proofs}\label{sec:appendix}
\setcounter{proposition}{0}

\begin{proposition}
If there exists a bound on $\Vert e^{-itH} - S^r \Vert_\infty \leq \mathcal{B} \leq 2$, the spectral norm of error kernel $E$ is bounded by
\begin{align}
    \Vert E \Vert_\infty \leq \frac{1}{t} \arccos{(1-\frac{\mathcal{B}^2}{2})}.
\end{align}
\end{proposition}
\begin{proof}
From the definition of $E$, we have
\begin{align} \label{ap: eq:exp e - i}
\Vert e^{-itE} - I \Vert_\infty =& \Vert e^{-itH} - S^r \Vert_\infty \leq \mathcal{B}. 
\end{align}
$E$ has the spectral decomposition of 
\begin{align}
    E =& \sum_j \lambda_j |j\rangle\langle j|.
\end{align}
Taylor series give
\begin{align}
    e^{-itE} - I =& \sum_{k=1}^\infty \frac{(-itE)^k}{k!} \\
    =& \sum_j \sum_{k=1}^\infty \frac{(-it\lambda_j)^k}{k!} |j\rangle \langle j| \\
    =& \sum_j (e^{-it\lambda_j} - 1) |j\rangle \langle j|.
\end{align}
Combined with Eq.~\eqref{ap: eq:exp e - i}, it holds
\begin{align}
    | e^{-it\lambda_j} -1| \leq \mathcal{B}.
\end{align}
$e^{-it\lambda_j}$ is the identity circle on the complex plane and $e^{-it\lambda_j}-1$ is the circle translated in negative direction. With the constraint of $-\frac{\pi}{t} \leq \lambda_j \leq \frac{\pi}{t}$, we can use the cosine theorem
\begin{align}
    \cos{(t\lambda_j)} \geq &  1-\frac{\mathcal{B}^2}{2} \\
    | \lambda_j | \leq & \frac{1}{t} \arccos{(1-\frac{\mathcal{B}^2}{2} )}.
\end{align}
Therefore, 
\begin{align}
    \Vert E\Vert_\infty =& \max_j |\lambda_j| \\
    \leq& \frac{1}{t} \arccos{(1-\frac{\mathcal{B}^2}{2} )},
\end{align}
which completes the proof. 
\end{proof}

\begin{theorem}
Observation error $\mathscr{E}$ defined in Def.~\ref{def: observation error} has the upper bound as
\begin{align}
    \mathscr{E} \leq &t \Vert [E, O] \Vert_\infty 
\end{align}
\end{theorem}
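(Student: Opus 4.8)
The plan is to eliminate the product formula $S^r$ in favour of the error kernel $E$ and then convert the resulting difference of expectation values into a single commutator integral. First I would use the defining relation $e^{-itE} = e^{-itH}S^{-r}$ together with the Hermiticity of $H$ and $E$ to solve for the product formula: rearranging and taking adjoints gives $S^{-r} = e^{itH}e^{-itE}$ and $S^r = e^{itE}e^{-itH}$. Substituting these into the Trotterized state yields $S^r \rho S^{-r} = e^{itE}\big(e^{-itH}\rho\, e^{itH}\big)e^{-itE}$, so that writing $\sigma := e^{-itH}\rho\, e^{itH}$ for the ideally evolved state, the observation error collapses to $\mathscr{E} = \big|\Tr(O\sigma) - \Tr(O\, e^{itE}\sigma e^{-itE})\big|$.

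Next I would invoke cyclicity of the trace to move the conjugation by $e^{\pm itE}$ off of $\sigma$ and onto $O$, obtaining $\mathscr{E} = \big|\Tr\big((O - e^{-itE}O\,e^{itE})\sigma\big)\big|$. The central step is then to represent the Heisenberg-type difference $O - e^{-itE}O\,e^{itE}$ as an integral of the commutator $[E,O]$. Setting $O(s) := e^{-isE}O\,e^{isE}$ and differentiating gives $\frac{\rmd}{\rmd s}O(s) = -i\,e^{-isE}[E,O]\,e^{isE}$, so the fundamental theorem of calculus yields $O - e^{-itE}O\,e^{itE} = i\int_0^t e^{-isE}[E,O]\,e^{isE}\,\rmd s$.

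Finally I would estimate the trace. Since $\sigma$ is a density operator, $|\Tr(A\sigma)| \le \Vert A \Vert_\infty$; pulling the spectral norm inside the integral by the triangle inequality and using unitary invariance (so that $\Vert e^{-isE}[E,O]\,e^{isE}\Vert_\infty = \Vert [E,O]\Vert_\infty$ for every $s$) produces $\mathscr{E} \le \int_0^t \Vert [E,O]\Vert_\infty\,\rmd s = t\Vert [E,O]\Vert_\infty$, which is exactly the claim.

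I expect the only place demanding genuine care is the bookkeeping in the first step: correctly inverting $e^{-itE} = e^{-itH}S^{-r}$ into clean expressions for \emph{both} $S^r$ and $S^{-r}$, and checking that the two conjugating factors combine so that the ideal state $\sigma$ emerges without leftover terms. Once that reduction is in hand, the commutator integral representation and the norm bound are routine, and the $t$-prefactor arises purely from the length of the integration interval.
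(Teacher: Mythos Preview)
Your proposal is correct and follows essentially the same route as the paper: both arguments rewrite the Trotter operator via the defining relation for $E$, express the resulting difference as an integral of unitarily conjugated copies of $[E,O]$ (the paper phrases this as the identity $[e^{tA},B]=e^{tA}\int_0^t e^{-\tau A}[A,B]e^{\tau A}\,\rmd\tau$, you as the fundamental theorem of calculus applied to $s\mapsto e^{-isE}Oe^{isE}$), and then finish with $|\Tr(A\sigma)|\le\Vert A\Vert_\infty$, the triangle inequality, and unitary invariance. The only cosmetic difference is ordering: the paper drops $\rho$ first and works with $\Vert[e^{-itE},O]\Vert_\infty$, whereas you keep the evolved state $\sigma$ until the final estimate.
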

\begin{proof}
With $|\Tr(AB)| \leq \Vert A \Vert_1 \Vert B \Vert_\infty$ and the unitary invariance of spectral norm, we have
\begin{align}
\mathscr{E} =&  \left |\Tr\bigg(Oe^{-itH}\rho e^{itH}\bigg)-\Tr\bigg(OS^r \rho S^{-r} \bigg) \right|\\
\leq & \bigg \Vert e^{itH} O e^{-itH} - S^{-r} O S^{r} \bigg \Vert_\infty \\
= &  \bigg \Vert  O e^{-itH} S^{-r} -  e^{-itH} S^{-r}  O \bigg \Vert_\infty \\
=&   \bigg \Vert [e^{-itE} ,O]   \bigg \Vert_\infty        .
\end{align}
Then the fact $[e^{tA},B] = e^{tA} \int_0^t \text{d}\tau e^{-\tau A} [A,B] e^{\tau A}$ gives
\begin{align}
\mathscr{E} \leq   & \bigg \Vert e^{-itE} \int_0^t \text{d}\tau  e^{-i\tau E} [E,O] e^{i\tau E}   \bigg \Vert_\infty   \\
 =& \bigg \Vert\int_0^t \text{d}\tau  e^{-i\tau E} [E,O] e^{i\tau E}   \bigg \Vert_\infty \\
 \leq & \int_0^t \text{d}\tau  \Vert  e^{-i\tau E} [E,O] e^{i\tau E} \Vert_\infty \\
 =& t \Vert [E, O] \Vert_\infty,
\end{align}
which completes the proof.
\end{proof}

\begin{proposition}
Given a Hamiltonian $H=\sum_{j=1}^{L} H_j$, observable $O$, evolution time $t$ and Trotter number $r$, the observation error with random input $\mathscr{E}_{1d}$ in $n$-qubit 1-design ensemble $\rho \in \mathcal{E}_{1d}$ bounded by
\begin{align}
    \mathscr{E}_{1d} = &\mathbb{E}_{\rho \in \mathcal{E}_{1d}} \left |\Tr\bigg(Oe^{-itH}\rho e^{itH}\bigg)-\Tr\bigg(OS^r \rho S^{-r} \bigg) \right|\\
    \leq & \frac{t}{2^n } \Vert [E, O]  \Vert_1 ,
\end{align}
where $ \Vert \cdot \Vert_1$ is trace norm.
\end{proposition}

\begin{proof}
Using $|\Tr(AB)|\leq \Tr(|A|B)$ for positive semi-definite $B$,
\begin{align}
    \mathscr{E}_{1d} = &\mathbb{E}_{\rho \in \mathcal{E}_{1d}} \left |\Tr\bigg(Oe^{-itH}\rho e^{itH}\bigg)-\Tr\bigg(OS^r \rho S^{-r} \bigg) \right|\\
    \leq & \mathbb{E}_{\rho \in \mathcal{E}_{1d}} \Tr\bigg(\bigg |e^{itH} Oe^{-itH}- S^{-r} OS^r  \bigg | \rho   \bigg) \\
    = & \frac{1}{2^n }\Tr\bigg(\bigg |e^{itH} Oe^{-itH}- S^{-r} OS^r  \bigg | I   \bigg) \\
    = &  \frac{1}{2^n } \bigg \Vert e^{itH} Oe^{-itH}- S^{-r} OS^r \bigg \Vert_1 .
\end{align}
Then, similar to the proof of Theorem~\ref{thm:simulation error bound from kernel}, it holds
\begin{align}\label{ap: eq:ensemble norm bound 1 and infty}
    \mathscr{E}_{1d}\leq & \frac{t}{2^n } \Vert [E, O] \Vert_1. 
\end{align}
\end{proof}

\begin{proposition}
Given the Hamiltonian $H=\sum_{j=1}^{L} H_j$, evolution time $t$, Trotter number $r$ and the equivalent Hamiltonian $\widetilde{H}$ satisfying $ e^{-it\widetilde{H}} := S^r$, for the first-order product formula $S_1^r=(\prod_{j=1}^{L}e^{-i\frac{t}{r}H_{j}})^r$, equivalent Hamiltonian is 
\begin{align}
    \widetilde{H} =& H +\frac{it}{r}\bigg( \frac{1}{2} \sum_{j=1}^{L-1}\sum_{k=j+1}^L [H_j, H_k] \bigg) + \mathcal{O}\bigg( \frac{t^2}{r^2} \bigg).
\end{align}
For the second-order product formula $S_2^r=(\prod_{j=1}^{L}e^{-i\frac{t}{2r}H_{j}}\prod_{j=L}^{1}e^{-i\frac{t}{2r}H_{j}})^r$, equivalent Hamiltonian is 
\begin{align}
  \widetilde{H} =& H +\bigg(\frac{it}{r} \bigg)^2 \bigg( -\frac{1}{24}  \sum_{j=1}^{L-1}\sum_{k=j+1}^{L}  [H_j  +2\sum_{k^\prime = j+1 }^L H_{k^\prime } ,[H_j, H_k] ] \bigg) + \mathcal{O}\bigg( \frac{t^3}{r^3} \bigg).
\end{align}
\end{proposition}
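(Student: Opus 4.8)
The plan is to reduce the statement to a single Baker--Campbell--Hausdorff (BCH) expansion of one Trotter step and then resum over the $r$ repetitions. Write $\tau = t/r$ and let $B$ denote one repetition of the product formula, so that $S^r = B^r$. If $B = e^{-i\tau G}$ for some Hermitian generator $G = G(\tau)$, then $S^r = e^{-ir\tau G} = e^{-itG}$, and comparing with the definition $e^{-it\widetilde{H}} := S^r$ gives $\widetilde{H} = G$ on the principal branch (legitimate because $\tau$ is small, so $B$ is close to the identity and $\log B$ is single-valued). Hence it suffices to expand $G = \frac{i}{\tau}\log B$ as a power series in $\tau$. Throughout I would fix the ordering convention used in the paper---the factor $e^{-i\tau H_1}$ acts first and therefore sits on the right of the product, as the example $\Pi=\{2,1,3\}\mapsto e^{-iH_3t}e^{-iH_1t}e^{-iH_2t}$ makes explicit---because the signs of the surviving commutator terms depend on it.

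For the first-order formula I would compute $\log B$ with $B = e^{-i\tau H_L}\cdots e^{-i\tau H_1}$ by applying the two-term identity $\log(e^Xe^Y) = X+Y+\tfrac12[X,Y]+\cdots$ iteratively, peeling off one factor at a time. Setting $P_m = e^{-i\tau H_L}\cdots e^{-i\tau H_m} = e^{-i\tau G_m}$, induction gives $G_m = G_{m+1} + H_m - \tfrac{i\tau}{2}[G_{m+1},H_m] + \mathcal{O}(\tau^2)$, and since $G_{m+1} = \sum_{k>m}H_k + \mathcal{O}(\tau)$ to the needed order, unrolling the recursion yields $G = H + \tfrac{i\tau}{2}\sum_{j=1}^{L-1}\sum_{k=j+1}^L[H_j,H_k] + \mathcal{O}(\tau^2)$. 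Restoring $\tau = t/r$ gives the claimed $\widetilde{H}$. The only subtlety is tracking the index order so that the first-order commutators appear as $[H_j,H_k]$ with $j<k$ (the recursion's minus sign flips to a plus when $[H_k,H_m]$ is reordered to $[H_m,H_k]$) and carry the correct overall sign.

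For the second-order formula the key structural observation is symmetry: because $S_2(\tau)$ is palindromic in its factors, $B(-\tau) = B(\tau)^{-1}$, so the generator satisfies $G(-\tau) = G(\tau)$ and is an even function of $\tau$. This annihilates the $\mathcal{O}(\tau)$ term automatically---explaining why the leading correction is $\mathcal{O}(\tau^2) = \mathcal{O}(t^2/r^2)$---and means I only need the $\tau^2$ coefficient. I would extract it by carrying the iterated BCH one order further, now keeping the double commutators $[X,[X,Y]]$ and $[Y,[Y,X]]$ together with products of the already-computed single commutators, and then collecting terms. The bookkeeping is what produces the nested commutator $[H_j + 2\sum_{k'>j}H_{k'},[H_j,H_k]]$ and the rational coefficient $-\tfrac{1}{24}$.

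The main obstacle is precisely this last combinatorial commutator bookkeeping for general $L$: tracking how the half-step factors on the two sides of the palindrome combine, retaining all second-order nested commutators with their correct multiplicities, and verifying that they resum into the stated inner operator $H_j + 2\sum_{k'>j}H_{k'}$ with coefficient $-1/24$ rather than some other grouping. A clean way to control this is to organize the computation via the systematic product-formula error analysis of Childs et al.\ \cite{childs2021theory} (equivalently, by an induction on $L$ that propagates an inductive hypothesis for the full $\tau^2$ generator), which handles the multiplicities uniformly. The remainder estimates $\mathcal{O}(\tau^2)$ and $\mathcal{O}(\tau^3)$ then follow from convergence of the BCH series for $\Vert\tau H\Vert$ small, i.e.\ for $r$ large enough.
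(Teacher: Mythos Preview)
Your proposal is correct and follows the same BCH-based strategy as the paper: reduce to one Trotter step $B=e^{-i\tau G}$ and expand $G=\tfrac{i}{\tau}\log B$ in $\tau$. The first-order computation is essentially identical; the paper just writes down the multi-term BCH expansion to second order in $\tau$ (your iterated two-term derivation is a way of proving that formula).

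The one place where the paper takes a shorter route is the second-order formula. Rather than carrying the iterated BCH to the next order and doing the nested-commutator bookkeeping directly, the paper observes that $S_2(\tau)$ is literally a first-order product formula on the length-$2L$ sequence $\tfrac12 H_1,\ldots,\tfrac12 H_L,\tfrac12 H_L,\ldots,\tfrac12 H_1$, and simply substitutes this palindromic sequence into the \emph{already computed} first-order expansion through order $\tau^2$. The palindrome makes the $\mathcal{O}(\tau)$ single-commutator sum vanish identically (your parity argument $G(-\tau)=G(\tau)$ explains \emph{why} this must happen), and the $\mathcal{O}(\tau^2)$ nested-commutator term collapses algebraically to the stated $-\tfrac{1}{24}\sum_{j<k}[H_j+2\sum_{k'>j}H_{k'},[H_j,H_k]]$. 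This reuse trick sidesteps exactly the combinatorial obstacle you flagged---tracking multiplicities of double commutators across the two half-steps---at the cost of needing the $\tau^2$ term of the generic multi-factor BCH expansion up front. Your approach trades that for a cleaner conceptual reason the odd terms die but a heavier direct computation (or an appeal to \cite{childs2021theory}) for the surviving coefficient; either way works.
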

\begin{proof}
From the definition, the equivalent Hamiltonian $\widetilde{H}$ has 
\begin{align}
    \bigg( e^{i\frac{t}{r}\widetilde{H}} \bigg )^r = \bigg(\prod_{j=1}^{L}e^{\frac{it}{r} H_{j}} \bigg)^r .
\end{align}
Therefore, using BCH formula, there exists $\widetilde{H}$ satisfying 
\begin{align}
    \widetilde{H}&= \frac{r}{it} \log \bigg (\prod_{j=1}^{L}e^{\frac{it}{r} H_{j}} \bigg)\\
    &= H + \frac{it}{r} \bigg(\frac{1}{2} \sum_{j=1}^{L-1}\sum_{k=j+1}^L [H_j, H_k] \bigg)  + \bigg(\frac{it}{r}\bigg)^2 \bigg(\frac{1}{12}  \sum_{j=1}^{L-1}\sum_{k=j+1}^{L}  [H -2H_k -4  \sum_{k^\prime = k+1 }^L H_{k^\prime } ,[H_j, H_k] ]   \bigg) + \mathcal{O}\bigg (\frac{t^3}{r^3}\bigg).\label{ap: eq:equivalent_hamiltonian_using_bch}
\end{align}
The equivalent Hamiltonian of the first-order product formula holds obviously. As to the second-order product formula, which could be seen as the first-product formula with Hamiltonian $H=\frac{1}{2}(H_1+\cdots +H_L + H_L + \cdots + H_1)$, taken into Eq.~\eqref{ap: eq:equivalent_hamiltonian_using_bch}, we have
\begin{align}
    \widetilde{H} & = H + \frac{it}{r}\cdot0 +  \bigg(\frac{it}{r}\bigg)^2 \bigg(-\frac{1}{24}  \sum_{j=1}^{L-1}\sum_{k=j+1}^{L}  [H_j  +2\sum_{k^\prime = j+1 }^L H_{k^\prime } ,[H_j, H_k] ]  \bigg) +\mathcal{O}\bigg (\frac{t^3}{r^3}\bigg)\\
    &=H + \bigg(\frac{it}{r}\bigg)^2 \bigg(-\frac{1}{24}  \sum_{j=1}^{L-1}\sum_{k=j+1}^{L}  [H_j  +2\sum_{k^\prime = j+1 }^L H_{k^\prime } ,[H_j, H_k] ]  \bigg) +\mathcal{O}\bigg (\frac{t^3}{r^3}\bigg),
\end{align}
which completes the proof.
\end{proof}

\begin{theorem}[Principal observation error]
Principal observation error $\mathscr{E}$ defined in Def.~\ref{def: observation error} has the form of 
\begin{align}\label{ap: eq: principal observation error}
    \mathscr{E} = \bigg| \Tr ( [e^{-iHt} M e^{iHt}-M,O] e^{-iHt} \rho e^{iHt} ) \bigg| + \mathcal{O}\big( \frac{t}{r}H^\prime  \big),
\end{align}
where $M$ is the same size Hermitian operator as $H$ satisfying $[iM,H]=H^\prime$, and $ \mathcal{O}( \frac{t}{r}H^\prime  )$ denotes the higher order error of $\frac{t}{r}$, because different order product formula has different orders of Hamiltonian difference, i.e. the first-order is $\mathcal{O}(H^\prime)\sim \mathcal{O}(\frac{t}{r})$ and $ \mathcal{O}( \frac{t}{r}H^\prime  )\sim \mathcal{O}(\frac{t^2}{r^2})$. We name $\widetilde{\mathscr{E}}$ the principal observation error as $\widetilde{\mathscr{E}} = | \Tr ( [e^{-iHt} M e^{iHt}-M,O] e^{-iHt} \rho e^{iHt} ) |$. Eq.\eqref{ap: eq: principal observation error} also holds by substituting $H$ with $\widetilde{H}$, and $M$ should satisfy $[iM, \widetilde{H}] = H^\prime$.
\end{theorem}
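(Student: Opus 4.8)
The plan is to replace the product formula by its equivalent Hamiltonian, $S^r = e^{-it\widetilde H}$, and expand the resulting conjugation to first order in the small Hamiltonian difference $H' = \widetilde H - H$. The algebraic device that collapses the first-order term into a single commutator is the operator $M$: since $[iM,H]=H'$ reads $H' = i[M,H]$, the Baker--Campbell--Hausdorff expansion gives $e^{iM}He^{-iM} = H + [iM,H] + \tfrac12[iM,[iM,H]] + \cdots = \widetilde H + \mathcal O(H'^2)$. Because conjugation commutes with the functional calculus, this yields $e^{-it\widetilde H} = e^{iM}e^{-itH}e^{-iM} + (\text{higher order})$, so that, writing $U=e^{-itH}$ and $A = e^{iM}Ue^{-iM}$, one has $S^r\rho S^{-r} = A\rho A^\dagger + (\text{higher order})$.

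Next I would expand $A$ to first order in $M$, namely $A = U + [iM,U] + \mathcal O(M^2)$, so that
\begin{align}
A\rho A^\dagger - U\rho U^\dagger = [iM,U]\,\rho\,U^\dagger + U\rho\,[iM,U^\dagger] + \mathcal O(M^2),
\end{align}
where I use $[iM,U]^\dagger = [iM,U^\dagger]$, valid because $iM$ is anti-Hermitian. Substituting into $\mathscr E = |\Tr(O(U\rho U^\dagger - S^r\rho S^{-r}))|$ leaves a single trace expression linear in $M$.

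The third step is bookkeeping with the cyclicity of the trace. I would expand $[iM,U] = i(MU-UM)$ and $[iM,U^\dagger]$ inside the trace, cycle the $U$'s through, and check that the four resulting terms reassemble into $i\,\Tr([UMU^\dagger - M,\,O]\,U\rho U^\dagger)$; equivalently, one verifies directly that $\Tr([UMU^\dagger-M,O]U\rho U^\dagger) = \Tr((UMU^\dagger-M)[O,U\rho U^\dagger])$ matches the linear term. The overall factor $i$ drops out under the modulus, so the surviving piece is exactly $\widetilde{\mathscr E} = |\Tr([e^{-iHt}Me^{iHt}-M,O]e^{-iHt}\rho e^{iHt})|$, with everything neglected being quadratic in $H'$, i.e. the claimed $\mathcal O(\tfrac{t}{r}H')$.

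Two points need care, and the second is the real obstacle. First, $M$ solving $[iM,H]=H'$ need not exist; here I would either invoke solvability of the vectorized Sylvester system $(I\otimes H - H^\top\otimes I)\vect(iM)=\vect(H')$ stated after the theorem, or bypass $M$ entirely through the identity $e^{-i\widetilde H\tau}Me^{i\widetilde H\tau}-M = \int_0^\tau \rmd s\, e^{-i\widetilde Hs}H'e^{i\widetilde Hs}$ (using $-i[\widetilde H,M]=H'$), which both recasts $\widetilde{\mathscr E}$ in the $M$-free integral form quoted below the theorem and delivers the stated variant with $H$ replaced by $\widetilde H$ and $M$ solving $[iM,\widetilde H]=H'$. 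Second, and more delicate, is confirming that the discarded terms are genuinely of higher order than the retained one: the principal term is linear in $H'$, whereas the remainder collects the $\mathcal O(M^2)$ from expanding $A\rho A^\dagger$, the BCH truncation $\widetilde H - e^{iM}He^{-iM} = \mathcal O([M,H'])$, and the Duhamel remainder from exponentiating that truncation, all quadratic in $H'$. Turning this order comparison from a formal into a rigorous statement is the step demanding the most attention, since it is precisely what justifies calling $\widetilde{\mathscr E}$ the \emph{principal} observation error.
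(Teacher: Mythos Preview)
Your approach is correct and takes a genuinely different route from the paper's. The paper expands both $e^{it\widetilde H}Oe^{-it\widetilde H}$ and $e^{itH}Oe^{-itH}$ as nested-commutator (adjoint) series, isolates in their difference the terms carrying exactly one $H'$, and then resums those terms via a separately-proved combinatorial identity (their Lemma~1, obtained by iterated Jacobi moves) that rewrites
\[
H'_n=\sum_{l=1}^n [H,\ldots[H,[H',[H,\ldots[H,O]]]]]
\;=\;\sum_{l=0}^{n-1}\binom{n}{l+1}[H,\ldots[H,[[[H',H],\ldots,H],O]]],
\]
after which the double series collapses to the $M$-form. Your argument short-circuits both the series and the lemma: from $[iM,H]=H'$ you recognise $e^{iM}He^{-iM}=\widetilde H+\mathcal O([M,H'])$, lift this through the exponential to $S^r\approx e^{iM}Ue^{-iM}$, and then a first-order expansion in $M$ together with trace cyclicity delivers $\widetilde{\mathscr E}$ in one step. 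This is considerably shorter and conceptually cleaner. The trade-off is in the remainder: the paper's series argument leaves the higher-order error expressed purely in $H'$ and $H$ (precisely the terms with $\ge 2$ insertions of $H'$), which is what feeds directly into the quantitative bound of Proposition~7 on $\mathscr E'$. Your remainder instead carries explicit $M$-dependence (the $\mathcal O(M^2)$ from expanding $A\rho A^\dagger$ and the $\tfrac12[iM,H']$ from the BCH truncation), and you are tacitly treating $\mathcal O(M)$ as $\mathcal O(H')$; this is fine for the formal order-in-$\tfrac{t}{r}$ statement the theorem asserts, but recovering an $M$-free estimate like Proposition~7 from your route would need an additional argument---a point you rightly flag as the delicate one.
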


\begin{proof}
From the definition,
\begin{align}
\mathscr{E} = \bigg| \Tr \bigg( ( e^{it \widetilde{H}}Oe^{-it \widetilde{H}} - e^{it H}Oe^{-it H} ) \rho \bigg) \bigg|.
\end{align}
Using BCH formula, we have
\begin{align}
     e^{it H}Oe^{-it H}&=\sum_{n=0}^{\infty}\frac{(it)^n F_n}{n!},\quad F_0=O,F_{n+1}=[H,F_n], \\
    e^{it \widetilde{H}}Oe^{-it \widetilde{H}}&=\sum_{n=0}^{\infty}\frac{(it)^n G_n}{n!}, \quad G_0=O,G_{n+1}=[\widetilde{H}, G_n].
\end{align}
 Then the observation error of first-order product formula can be calculated by
\begin{align}
     \mathscr{E}=& \left | \sum_{n=0}^\infty \Tr\bigg( \frac{(it)^n (F_n - G_n)}{n!} \rho \bigg) \right|\\
     =&\left |\Tr\bigg( it  \rho  [H-\widetilde{H} ,O]+ \frac{(it)^2}{2} \rho  ([H,[H,O]] - [\widetilde{H},[\widetilde{H}, O]])  +\frac{(it)^3}{3!} \rho ([H,[H,[H,O]]]-[\widetilde{H},[\widetilde{H},[\widetilde{H},O]]])+ \cdots \bigg) \right| \\
     =& \bigg| \Tr \bigg( it\rho [H^\prime,O] + \frac{(it)^2}{2} \rho ([H^\prime,[H,O]] + [H,[H^\prime,O]]) \\
     &+ \frac{(it)^3}{3!} \rho ([H^\prime,[H,[H,O]]] + [H,[H^\prime,[H,O]]] +  [H,[H,[H^\prime,O]]]) + \cdots \bigg) + \mathcal{O}\big(\frac{t}{r}H^\prime \big ) \bigg | \\
     =& \left | \sum_{n=1}^\infty \Tr\bigg( \frac{(it)^{n} H^\prime_n}{n!} \rho \bigg) \right| + \mathcal{O}\big(\frac{t}{r}H^\prime \big ).
\end{align}
Next, we introduce Lem.~\ref{ap: lem:series_of_sum_hao} for $H^\prime_n$, which is defined as $H^\prime_n = \sum_{l=1}^n \underbrace{[H,\cdots [H}_{l-1},[H^\prime, \underbrace{[H,\cdots[H}_{n-l},O]]]]\cdots]$.
\begin{align}
     \left | \sum_{n=1}^\infty \Tr\bigg( \frac{(it)^{n} H^\prime_n}{n!} \rho \bigg) \right|=& \left |\sum_{n=1}^\infty \frac{(it)^{n}}{n!}   \sum_{l=0}^{n-1} \frac{n!}{(l+1)!(n-l-1)!} \Tr(   \underbrace{[H,\cdots [H}_{n-l-1},[[[H^\prime, \underbrace{H]\cdots ,H]}_{l},O]]]   \rho )\right|\\
     =& \bigg | \sum_{l=0}^\infty \frac{(it)^{l+1}}{(l+1)!} \sum_{n=l+1}^\infty \frac{(it)^{n-l-1}}{(n-l-1)!}  \Tr(   \underbrace{[H,\cdots [H}_{n-l-1},[[[H^\prime, \underbrace{H]\cdots ,H]}_{l},O]]]   \rho )\bigg|\\
     =&  \bigg | \sum_{l=0}^\infty \frac{(it)^{l+1}}{(l+1)!}   \Tr(   [[[H^\prime, \underbrace{H]\cdots ,H]}_{l},O] e^{-iHt} \rho  e^{iHt})\bigg| \\
     =& \bigg | \sum_{l=0}^\infty \frac{(it)^{l+1}}{(l+1)!}   \Tr(   [[[M, \underbrace{H]\cdots ,H]}_{l+1},O] e^{-iHt} \rho  e^{iHt})\bigg| \\
     =&  \bigg| \Tr ( [e^{-iHt} M e^{iHt}-M,O] e^{-iHt} \rho e^{iHt} ) \bigg|,
\end{align}
where $M$ is the same size Hermitian operator as $H$ satisfying $[iM,H]=H^\prime$. 
\end{proof}

\setcounter{lemma}{0}
\begin{lemma}\label{ap: lem:series_of_sum_hao}
$H^\prime_n$ is defined as $H^\prime_n = \sum_{l=1}^n \underbrace{[H,\cdots [H}_{l-1},[H^\prime, \underbrace{[H,\cdots[H}_{n-l},O]]]]\cdots]$ with $H,O,H^\prime \in \mathcal{H}$. $H^\prime_n$ has the following equivalent forms of
\begin{align}
H^\prime_n &=\sum_{l=0}^{n-1}  {n\choose l+1}  \underbrace{[H,\cdots [H}_{n-l-1},[[[H^\prime, \underbrace{H]\cdots ,H]}_{l},O]]],\\
&=\sum_{l=0}^{n-1} \frac{n!}{(l+1)!(n-l-1)!}   \underbrace{[H,\cdots [H}_{n-l-1},[[[H^\prime, \underbrace{H]\cdots ,H]}_{l},O]]].
\end{align}
\end{lemma}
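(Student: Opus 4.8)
The plan is to recast the statement as a purely \emph{operator} (super-operator) identity on the Lie algebra and then reduce it to two standard combinatorial facts. First I would introduce the adjoint maps $a:=\operatorname{ad}_H=[H,\,\cdot\,]$ and, for each $k\geq 0$, $b_k:=\operatorname{ad}_{\operatorname{ad}_H^k(H^\prime)}$, so that $b_0=\operatorname{ad}_{H^\prime}=[H^\prime,\,\cdot\,]$. Unwinding the definition of $H^\prime_n$ with the substitution $j=l-1$ turns the original sum into $H^\prime_n=\sum_{j=0}^{n-1}a^{j}\,b_0\,a^{\,n-1-j}(O)$, i.e.\ a single $b_0$ sandwiched between powers of $a$. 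Since the two target forms place every power of $a$ to the \emph{left} of the $b$'s, the entire task is to commute the trailing factor $a^{\,n-1-j}$ leftward past $b_0$.

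The key structural input is that $\operatorname{ad}$ is a Lie-algebra homomorphism, which gives $[a,b_k]=[\operatorname{ad}_H,\operatorname{ad}_{\operatorname{ad}_H^k(H^\prime)}]=\operatorname{ad}_{[H,\operatorname{ad}_H^k(H^\prime)]}=b_{k+1}$, equivalently $b_k a=a b_k-b_{k+1}$. Iterating this single relation, I would establish the commutation sub-lemma
\begin{align}
    b_0\,a^{m}=\sum_{i=0}^{m}(-1)^i\binom{m}{i}\,a^{\,m-i}\,b_i
\end{align}
by induction on $m$, where the inductive step is a one-line application of Pascal's rule $\binom{m}{i}+\binom{m}{i-1}=\binom{m+1}{i}$ after reindexing the shift $b_i\mapsto b_{i+1}$.

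Substituting the sub-lemma into $H^\prime_n=\sum_{j=0}^{n-1}a^{j}b_0 a^{\,n-1-j}(O)$ and using $a^{j}a^{\,n-1-j-i}=a^{\,n-1-i}$, I would exchange the order of the two summations so that, for each fixed $i$, the accumulated coefficient is $\sum_{j=0}^{n-1-i}\binom{n-1-j}{i}$. The hockey-stick identity $\sum_{k=i}^{n-1}\binom{k}{i}=\binom{n}{i+1}$ collapses this to $\binom{n}{i+1}$, yielding $H^\prime_n=\sum_{i=0}^{n-1}(-1)^i\binom{n}{i+1}\,a^{\,n-1-i}\,b_i(O)$. To match the stated form I would finally identify the right-nested commutator $\underbrace{[[H^\prime,H],\cdots,H]}_{l}=(-1)^l\operatorname{ad}_H^l(H^\prime)$, so that $[\underbrace{[[H^\prime,H],\cdots,H]}_{l},O]=(-1)^l b_l(O)$; the sign $(-1)^l$ then cancels the alternating sign from the sub-lemma, giving exactly the claimed expression, and its factorial rewriting $\binom{n}{l+1}=\tfrac{n!}{(l+1)!(n-l-1)!}$ is immediate.

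I expect the only delicate point to be sign bookkeeping: the statement uses the \emph{right}-nested brackets $[\,\cdot\,,H]$ whereas the natural super-operator $a=\operatorname{ad}_H$ uses \emph{left} brackets $[H,\,\cdot\,]$, and these differ by $(-1)^l$; keeping this factor synchronized with the alternating sign generated by the commutation sub-lemma is where an error is most likely to slip in. By contrast, the two combinatorial identities (Pascal and hockey-stick) are routine, and the homomorphism relation $[a,b_k]=b_{k+1}$ is the single conceptual ingredient that makes the telescoping go through.
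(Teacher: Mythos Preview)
Your argument is correct. Both proofs rest on the same Jacobi-type identity --- your relation $[a,b_k]=b_{k+1}$ for $a=\operatorname{ad}_H$ is exactly the paper's Eq.~\eqref{eq:lie_algebra_property_equation} rewritten at the super-operator level --- but the \emph{organization} is genuinely different. The paper works directly with the nested brackets and iterates the single-step rewrite
\[
[H',[H,X]]=[H,[H',X]]+[[H',H],X]
\]
across the whole sum, displaying the first few stages explicitly and reading off the binomial pattern by inspection; it never isolates a closed-form commutation formula or names the combinatorial identities involved. You instead (i) abstract to $a,b_k$, (ii) prove the closed-form sub-lemma $b_0 a^m=\sum_i(-1)^i\binom{m}{i}a^{m-i}b_i$ by a one-step induction with Pascal's rule, and (iii) collapse the double sum with the hockey-stick identity. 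Your route is cleaner and fully rigorous at each step, with the only cost being the extra bookkeeping of the sign $(-1)^l$ coming from the right-nested convention $[\,\cdot\,,H]$ versus $\operatorname{ad}_H$, which you handle correctly. The paper's route is more elementary in that it avoids any super-operator language, but the ``step by step'' display with $\vdots$ leaves the induction implicit.
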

\begin{proof}
From the Lie Algebra properties $[a,[b,c]] = - [b,[c,a]] - [c,[a,b]]$, we have
\begin{align}
\underbrace{[H,\cdots [H}_{l-1},[H^\prime, \underbrace{[H,\cdots[H}_{n-l},O]]]]\cdots] = \underbrace{[H,\cdots [H}_{l},[H^\prime, \underbrace{[H,\cdots[H}_{n-l-1},O]]]]\cdots] + \underbrace{[H,\cdots [H}_{l-1},[[H^\prime,H], \underbrace{[H,\cdots[H}_{n-l-1},O]]]]\cdots].\label{eq:lie_algebra_property_equation}
\end{align}
Straightly, using the equation above, we have
\begin{align}
H^\prime_n = n \underbrace{[H,\cdots [H}_{n-1},[H^\prime,O]]]+\sum_{l=1}^{n-1} l \underbrace{
[H,\cdots[H, }_{l-1}[[H^\prime,H],\underbrace{[H,\cdots [H}_{n-l-1},O]]]]\cdots].
\end{align}
Then we can see the $[H^\prime,H]$ terms as new $H^\prime$ and continue repeating the rotation process using Eq.~\eqref{eq:lie_algebra_property_equation} step by step, as
\begin{align}
H^\prime_n =& {n\choose1} \underbrace{[H,\cdots [H}_{n-1},[H^\prime,O]]]+\sum_{l=1}^{n-1} l \underbrace{
[H,\cdots[H, }_{l-1}[[H^\prime,H],\underbrace{[H,\cdots [H}_{n-l-1},O]]]]\cdots],\\
= &{n\choose1}\underbrace{[H,\cdots [H}_{n-1},[H^\prime,O]]]+{n\choose2}\underbrace{[H,\cdots [H}_{n-2},[[H^\prime,H],O]]]+\sum_{l=1}^{n-2}\sum_{l^\prime=1}^{l}l^\prime \underbrace{
[H,\cdots[H, }_{l-1}[[[H^\prime,H]H],\underbrace{[H,\cdots [H}_{n-l-2},O]]]]\cdots],\\
=& {n\choose1}\underbrace{[H,\cdots [H}_{n-1},[H^\prime,O]]]+{n\choose2}\underbrace{[H,\cdots [H}_{n-2},[[H^\prime,H],O]]]+{n\choose3}\underbrace{[H,\cdots [H}_{n-3},[[[H^\prime,H],H],O]]]\\
&+\sum_{l=1}^{n-3}\sum_{l^\prime=1}^{l} \sum_{l^{\prime\prime}=1}^{l^\prime} l^{\prime \prime} \underbrace{
[H,\cdots[H, }_{l-1}[[[H^\prime,H]H],\underbrace{[H,\cdots [H}_{n-l-3},O]]]]\cdots],\\
\vdots&\\
=&{n\choose1} \underbrace{[H,\cdots [H}_{n-1},[H^\prime,O]]]+ {n\choose2}  \underbrace{[H,\cdots [H}_{n-2},[[H^\prime,H],O]]]+ \cdots +{n\choose n}  [[[H^\prime,\underbrace{H]\cdots H]}_{n-1},O],\\
=& \sum_{l=0}^{n-1}  {n\choose l+1}  \underbrace{[H,\cdots [H}_{n-l-1},[[[H^\prime, \underbrace{H]\cdots ,H]}_{l},O]]].
\end{align}
\end{proof}

\setcounter{theorem}{5}
\begin{theorem}[Bound of principal observation error]
 Principal observation error defined in Thm.~\ref{thm:principal observation error} has an upper bound as
\begin{align}
    \widetilde{\mathscr{E}} \leq &  \int_0^t \rmd \tau \Vert [  H^\prime ,e^{i\tau \widetilde{H} } O e^{-i\tau \widetilde{H}}] \Vert_\infty.
\end{align}
\end{theorem}
\begin{proof}
Using $|\Tr (AB) | \leq \Vert A \Vert_\infty \Vert B\Vert_1 $, we have
\begin{align}
     \widetilde{\mathscr{E}} =& \bigg|\int_0^t \rmd \tau  \Tr( [ e^{-i\tau \widetilde{H} }H^\prime e^{i\tau \widetilde{H} },O] e^{-i\widetilde{H}t} \rho e^{i\widetilde{H}t} ) \bigg|\\
     \leq &\int_0^t \rmd \tau  \bigg| \Tr( [ e^{-i\tau \widetilde{H} }H^\prime e^{i\tau \widetilde{H} },O] e^{-i\widetilde{H}t} \rho e^{i\widetilde{H}t} ) \bigg|\\
     \leq & \int_0^t \rmd \tau \Vert [ e^{-i\tau \widetilde{H} }H^\prime e^{i\tau \widetilde{H} },O]\Vert_\infty \Vert  e^{-i\widetilde{H}t} \rho e^{i\widetilde{H}t} \Vert_1\\
     = & \int_0^t \rmd \tau \Vert [ e^{-i\tau \widetilde{H} }H^\prime e^{i\tau \widetilde{H} },O]\Vert_\infty.
\end{align}
\end{proof}

\begin{proposition}
$ \mathscr{E}^\prime$ is defined as the absolute value of the difference of simulation error and principal error. Specifically, it is bounded by
\begin{align}
    \mathscr{E}^\prime :=& \bigg| \mathscr{E} - \widetilde{\mathscr{E} } \bigg| \\
    \leq &  \Vert O \Vert_\infty  e^{2t \Vert H\Vert_\infty} (e^{2t\Vert H^\prime \Vert_\infty } -1 -2t \Vert H^\prime \Vert_\infty  ).
\end{align}
\end{proposition}
\begin{proof}
From the definition, we have
\begin{align}
   \mathscr{E}^\prime =& \left | \sum_{n=0}^\infty \Tr\bigg( \frac{(it)^n (G_n - F_n- H^\prime_n)}{n!} \rho \bigg) \right|.
\end{align}
And there holds
\begin{align}
&\Vert G_n - F_n- H^\prime_n  \Vert_\infty \\
=& \Vert  \underbrace{[\widetilde{H}\cdots[\widetilde{H} }_n , O]\cdots]  - \underbrace{[H\cdots[H}_n , O]\cdots] - \sum_{l=1}^n \underbrace{[H,\cdots [H}_{l-1},[H^\prime, \underbrace{[H,\cdots[H}_{n-l},O]]]]\cdots]\Vert_\infty \\
= & 2^n  \Vert  \underbrace{[H+H^\prime \cdots[H+H^\prime }_n , O]\cdots]  - \underbrace{[H\cdots[H}_n , O]\cdots] - \sum_{l=1}^n \underbrace{[H,\cdots [H}_{l-1},[H^\prime, \underbrace{[H,\cdots[H}_{n-l},O]]]]\cdots]\Vert_\infty \\
\leq &\Vert O \Vert_\infty \sum_{l=2}^{n} {n \choose l}\Vert H^\prime \Vert^l_\infty +  \Vert H \Vert^{n-l}_\infty \\
=& 2^n \Vert O \Vert_\infty \bigg( (\Vert H\Vert_\infty + \Vert H^\prime \Vert_\infty )^n - \Vert H\Vert_\infty^n -n \Vert H^\prime \Vert_\infty \Vert H\Vert_\infty^{n-1} \bigg)
\end{align}
With that we can give a bound on
\begin{align}
  \mathscr{E}^\prime \leq & \sum_{n=1}^\infty \frac{t^n}{n!} \Vert G_n - F_n- H^\prime_n  \Vert_\infty \\
\leq & \Vert O \Vert_\infty  \sum_{n=1}^\infty \frac{(2t)^n}{n!}  \bigg( (\Vert H\Vert_\infty + \Vert H^\prime \Vert_\infty )^n - \Vert H\Vert_\infty^n -n \Vert H^\prime \Vert_\infty \Vert H\Vert_\infty^{n-1} \bigg) \\
=&  \Vert O \Vert_\infty \bigg(  e^{2t (  \Vert H\Vert_\infty + \Vert H^\prime \Vert_\infty  )} -1 - (e^{2t \Vert H\Vert_\infty } -1 ) - 2t \Vert H^\prime \Vert_\infty e^{2t \Vert H\Vert_\infty} \bigg) \\
=& \Vert O \Vert_\infty \bigg ( e^{2t (  \Vert H\Vert_\infty + \Vert H^\prime \Vert_\infty )} -e^{2t \Vert H\Vert_\infty} - 2t \Vert H^\prime \Vert_\infty e^{2t \Vert H\Vert_\infty} \bigg ) \\
=&  \Vert O \Vert_\infty  e^{2t \Vert H\Vert_\infty} (e^{2t\Vert H^\prime \Vert_\infty } -1 -2t \Vert H^\prime \Vert_\infty  ) .
\end{align}
\end{proof}

\section{Algorithm}
\begin{algorithm}[H]
  \SetAlgoLined
  \SetKwData{Left}{left}\SetKwData{This}{this}\SetKwData{Up}{up}
  \SetKwFunction{Union}{Union}\SetKwFunction{FindCompress}{FindCompress}
  \SetKwInOut{Input}{input}\SetKwInOut{Output}{output}
  
  \Input{Hamiltonian $H=\sum_{j=1}^LH_j$, observable $O$, evolution time $t$ and Trotter number $r$,\\initial annealing temperature $\theta_0$, end temperature $\theta_\infty$, decay ratio $\alpha$.}
  \Output{Evolution order $\Pi^*$}
  
   $\Pi\leftarrow\{1,\cdots,L\}$
   
   $\theta \leftarrow \theta_0$
    
  \While{$ \theta > \theta_\infty$}{

        choose a swap pair $(a,b)$ from $\{ (a, b)| 1\leq a < b \leq L \}$ randomly

        $\Pi^\prime \leftarrow$ ${\rm swap}(\Pi)$ with $(a, b)$

        \If{$\mathcal{L}(\Pi^\prime) \leq \mathcal{L}(\Pi)$}{
      $\Pi \leftarrow \Pi^\prime$
    }
        \Else{$\Pi \leftarrow \Pi^\prime$ with the probability $\exp(-\frac{\mathcal{L}(\Pi^\prime) - \mathcal{L}(\Pi)}{\theta})$}

        $\theta \leftarrow \alpha \theta$
        }

  \Return{$\Pi$}
  \caption{Evolution order optimization via simulated annealing}
  \label{alg:simulated annealing}
\end{algorithm}

\section{}
Here we provide the Hamiltonian of the hydrogen molecule in STO-3G basis~\cite{tranter2019ordering}
\begin{align}\label{ap: eq: hydrogen}
\begin{aligned}
H=&-0.81262 I+0.17120 \sigma_0^z+0.17120 \sigma_1^z-0.22279 \sigma_2^z \\
&-0.22279 \sigma_3^z+0.16862 \sigma_1^z \sigma_0^z+0.12054 \sigma_2^z \sigma_0^z+0.16587 \sigma_3^z \sigma_0^z \\
&+0.16587 \sigma_2^z \sigma_1^z+0.12054 \sigma_3^z \sigma_1^z+0.17435 \sigma_3^z \sigma_2^z \\
&-0.04532 \sigma_3^y \sigma_2^y \sigma_1^x \sigma_0^x+0.04532 \sigma_3^x \sigma_2^y \sigma_1^y \sigma_0^x \\
&+0.04532 \sigma_3^y \sigma_2^x \sigma_1^x \sigma_0^y-0.04532 \sigma_3^x \sigma_2^x \sigma_1^y \sigma_0^y.
\end{aligned}
\end{align}

\end{document}